\numberwithin{equation}{section}
\newtheorem{thmm}{Theorem}
\newtheorem{thm}{Theorem}[section]
\newtheorem{lem}[thm]{Lemma}
\newtheorem{cor}[thm]{Corollary}
\newtheorem{prop}[thm]{Proposition}
\newtheorem{rem}[thm]{Remark}
\newcommand\cC{{\mathcal C}}
\newcommand\cE{{\mathcal E}}
\newcommand\cH{{\mathcal H}}
\newcommand\cL{{\mathcal L}}
\newcommand\cO{{\mathcal O}}
\newcommand\cM{{\mathcal M}}
\newcommand\bE{{\mathbb E}}
\newcommand\bG{{\mathbb G}}
\newcommand\bM{{\mathbb M}}
\newcommand\bN{{\mathbb N}}
\newcommand\bP{{\mathbb P}}
\newcommand\bR{{\mathbb R}}
\newcommand\bZ{{\mathbb Z}}
\newcommand\ba{{\boldsymbol{a}}}
\newcommand\bbeta{{\boldsymbol{\beta}}}
\newcommand\ve{\varepsilon}
\newcommand\vf{\varphi}
\newcommand\Id{{\mathds{1}}}
\newcommand\id{id}
\newcommand{\bi}{{\boldsymbol i}}
\begin{document}

\title{Energy transfer in a fast-slow Hamiltonian system}
\author{Dmitry Dolgopyat}
\address{Dmitry Dolgopyat\\
Department of Mathematics\\
University of Maryland\\
4417 Mathematics Bldg,  College Park,  MD 20742, USA}
\email{{\tt dmitry@math.umd.edu}}
\urladdr{http://www.math.umd.edu/\~\ \hskip-4pt dmitry}
\author{Carlangelo Liverani}
\address{Carlangelo Liverani\\
Dipartimento di Matematica\\
II Universit\`{a} di Roma (Tor Vergata)\\
Via della Ricerca Scientifica, 00133 Roma, Italy.}
\email{{\tt liverani@mat.uniroma2.it}}
\urladdr{http://www.mat.uniroma2.it/\~\ \hskip-4pt liverani}
\date{\today.}
\begin{abstract}
We consider a finite region of a lattice of weakly interacting geodesic flows on manifolds of negative curvature and we show that, when rescaling the interactions and the time appropriately, the energies of the flows evolve according to a non linear diffusion equation. This is a first step toward the derivation of macroscopic equations from a Hamiltonian microscopic dynamics in the case of weakly coupled systems.
\end{abstract}
\thanks{We thank Gabriel Paternain for suggesting to us reference \cite{CS}. C.Liverani acknowledges the partial support of the European Advanced Grant  Macroscopic Laws and Dynamical Systems (MALADY) (ERC AdG 246953). Both authors warmly thank CIRM, Marseille, that fostered the beginning of this work by financing our {\em research in pairs}. Finally, it is a pleasure to thank the  Fields Institute, Toronto, where the paper was finished.}
\keywords{Averaging theory, Non-Equilibrium statistical Mechanics, Transport, contact flows}
\subjclass[2010]{34C29, 60F17, 82C05, 82C70}
\maketitle

\section{Introduction}
One of the central problems in the study of non-equilibrium statistical physics
is the derivation of transport equations for conserved quantities, in particular energy transport, 
from first principles, (see \cite{BLR}, and references therein, or \cite{Sp}, for a more general discussion on the derivation of macroscopic equations from microscopic dynamics). 

Lately several results have appeared trying to bring new perspective to the above problem in a collective effort to attack the problem from different points of views. Let us just mention, as examples, papers considering stochastic models \cite{BBO, BOS, B}, approaches starting from kinetic equations or assuming extra hypotheses \cite{ALS, LS, BK1} 
or papers trying to take advantage of the point of view and 
results developed in the field of Dynamical Systems 
\cite{EY, DKL,DL1,DL2, BK2, CE, Ru}. 
This paper belongs to the latter category but it is closely related to results obtained for stochastic models (e.g., \cite{LO}).

We consider a {\em microscopic} dynamics determined by a (classical) Hamiltonian describing a finite number of weakly interacting strongly chaotic systems and we explore the following strategy to derive a macroscopic evolution: first one looks at times for which we have an effective energy exchange between interacting systems, then takes the limit for the strength of the interaction going to zero and hopes to obtain a self-contained equation describing the evolution of the energies only. We call such an equation {\em mesoscopic} since most of the degrees of freedom have been averaged out. 
Second, one performs on such a mesoscopic equation a thermodynamic limit to obtain a {\em macroscopic} evolution. In particular, one can consider a scaling limit of the diffusive type in order to obtain a non linear heat equation as in the case of the so called {\em hydrodynamics limit} for particle systems, see \cite{GPV, Va} for more details. A similar strategy has been carried out, at a heuristic level, in \cite{GG1, GG2}.

The {\em first step} of such a program is accomplished in this paper. 
It is interesting to note that the mesoscopic equation that we obtain seems to have some very natural and universal structure since it holds also when starting from different models. Indeed, essentially the same equation is obtained in \cite{LO} 
for a system of coupled nonlinear oscillators
in the presence of an energy preserving randomness. In addition, such an equation is almost identical to the one studied in \cite{Va} apart from the necessary difference that the diffusion is a degenerate one. Indeed, since it describes the evolution of energies, and energies are positive, the diffusion coefficients must necessarily be zero when one energy is zero.

Since, due to the weak interaction, the energies vary very slowly, once the time is rescaled so that the energies evolve on times of order one all the other variables will evolve extremely fast. Thus our result is an example of {\em averaging theory} for slow-fast systems. Yet, in our case the currents have zero average which means that standard averaging theory (such as, e.g. \cite{FW}) cannot suffice. It is necessary to look at longer times when the fluctuations play a fundamental role. The study of such longer times can in principle be accomplished thanks to the theory developed in \cite{Do2}. 

Unfortunately, the results in \cite{Do2} do not apply directly and we are forced to a roundabout in order to obtain the wanted result. Not surprisingly, the trouble takes place at low energies. We have thus to investigate with particular care the behavior of the system at low energies. In particular, we prove that the probability for any particle to reach zero energy, in the relevant time scale, tends to zero.

The structure of the paper is as follows: section \ref{sec:model} contains the precise description of the microscopic model and the statement of the results. Section \ref{sec:heu} describes the logic of the proof at a non technical level and points out the technical difficulties that must be overcame to make the argument rigorous. In the following section we show how to modify the dynamics at low energies in such a way that existing results can be applied. Then, in section \ref{sec:existence}, we investigate the modified dynamics and show that its accumulation points satisfy a mesoscopic equation of the wanted type. 
In section \ref{sec:compute} we compute explicitly the properties of the coefficients of the limit equation for the modified dynamics and in section \ref{sec:structure} we use this knowledge to show that the equation has a unique solution, hence the modified process converges to this solution. In section \ref{sec:reach} we discuss the limit equation for the original dynamics under the condition that no particle reaches zero in finite time. The fact that this condition 
holds in our model is proven in section \ref{sec:zero}. 
The paper ends with two appendices. 
In the first, for the reader convenience, some known results 
from the averaging theory for systems with
hyperbolic fast motion are restated in a way suitable for our needs. 
The second appendix
contains some boring, but essential, computations.

\section{The model and the result}\label{sec:model}
For $d\in\bN$, we consider a lattice $\bZ^d$ and a finite connected 
region $\Lambda\subset \bZ^d$. 
Associated to each site in $\Lambda$ we have the cotangent bundle $T^*M$ 
of a $\cC^\infty$ compact Riemannian $d$-dimensional manifold $M$ of 
strictly negative curvature and the associated geodesic flow $g^t$.  
We have then the phase space $\cM=(T^*M)^\Lambda$ and we designate a point 
as $(q_x,p_x)$, $x\in\Lambda$. 
It is well known that the geodesic flows is a Hamiltonian flow. 
If we define $\bi: T^*M\to TM$ to be the natural isomorphism defined by $w(v)=\langle \bi(w), v\rangle_{G}$, $G$ being the Riemannian metric, then the Hamiltonian reads\footnote{By $p_x^2$ we mean $\langle \bi (p_x),\bi(p_x)\rangle_{G(q_x)}=\langle p_x,p_x\rangle_{\tilde G}$ where $\tilde G={\bi_*(G)}$.} $H_0=\sum_{x\in\Lambda} \frac 12 p_x^2$ and the symplectic form is given by $\omega=d\boldsymbol{q}\wedge d\boldsymbol{p}$.\footnote{To be more precise, given the canonical projection $\pi(q,p)=q$, first define the one form, on $T(T^* M)$, $\omega^1_{(q,p)}(\xi)=p(d\pi(\xi))$. Then $\omega:=-d\omega^1$. Given coordinates $\boldsymbol{q}$ on $U\subset M$ and using the coordinates $\boldsymbol{p}$ for the one form $p=\sum_i \boldsymbol{p}_i\,d\boldsymbol{q}_i\in T^*M$, we have $\omega^1=\sum_i \boldsymbol{p}_i\, d\boldsymbol{q}_i$ and $\omega=\sum_i d\boldsymbol{q}_i\wedge d\boldsymbol{p}_i$, as stated.}
Thus, given $x\in\Lambda$, the equations of motion take the form (see \cite[Section 1]{Pa} for more details)
\begin{equation}\label{eq:hami-geo}
\begin{split}
&\dot q_x=\bi(p_x)\,,\\
&\dot p_x=\tilde F(q_x,p_x)\,,
\end{split}
\end{equation}
where the $\tilde F$ is homogeneous in the $p_x$ of degree two.
Note that, by the Hamiltonian structure, $e_x:=\frac 12 p_x^2$ is constant in time for each $x\in \Lambda$. It is then natural to use the variables $(q_x,  v_x, e_x)$, where $ v_x:=(p_x^2)^{-\frac 12}\bi(p_x)$ belongs to the unit tangent bundle $T^1M$ of $M$.\footnote{Clearly $e_x$ is the (kinetic) energy of the geodesic flow at $x$.} We have then the equations
\begin{equation}\label{eq:indip-geo}
\begin{split}
&\dot q_x=\sqrt {2e_x}v_x\,,\\
&\dot v_x=\sqrt {2e_x}F(q_x,v_x)\,,\\
&\dot e_x=0,
\end{split}
\end{equation}
where $F$ is homogeneous of second degree in  $v_x$.

Next we want to introduce a small energy exchange between particles. To describe such an exchange we introduce a symmetric, non constant, function (potential) $V\in\cC^\infty(M^2, \bR)$ and, for each $\ve>0$, consider the flow $g_\ve^t$ determined by the Hamiltonian $H_\ve=\sum_{x\in\Lambda} \frac 12 p_x^2+\frac\ve 2\sum_{|x-y|=1}V(q_x,q_y)$, that is by the equations
\[
\begin{split}
&\dot q_x=\bi(p_x)\,,\\
&\dot p_x=\tilde F(q_x,p_x)-\ve\sum_{|y-x|=1}d_{q_x}V(q_x,q_y).
\end{split}
\]
Or, alternatively,\footnote{In the interacting case one could chose to include the interaction in the energy and define $e_x^\ve:=\frac 12 p_x^2+\frac\ve 4\sum_{|x-y|=1}V(q_x,q_y)$. This is the choice made in \cite{LO}. Yet, in the present context $|e_x-e_x^\ve|\leq |V|_\infty\ve$, hence the actual choice is irrelevant in the limit $\ve\to 0$ and $e_x$ turns out to be computationally simpler.}
\begin{equation}\label{eq:full}
\begin{split}
&\dot q_x=\sqrt{2e_x} v_x\\
&\dot v_x=\sqrt{2e_x} F(q_x,v_x)+\frac{\ve}{\sqrt{2e_x}}\sum_{|y-x|=1}\{  v_x L_x V-\nabla_{q_x}V(q_x,q_y)\}\\
&\dot e_x=-\ve\sqrt{2e_x}\sum_{|x-y|=1}L_xV,
\end{split}
\end{equation}
where $\langle\nabla V, w\rangle_G= d V(w)$ and 
\begin{equation}
\label{GenGF}
L_x=v_x\partial_{q_x}+F(q_x,v_x)\partial_{v_x}
\end{equation}
denotes the generator associated to the geodesic flow of the $x$ particle on $T_1M$.

We will consider {\em random} initial conditions of the following type
\begin{equation}\label{eq:init-cond}
\begin{split}
&\bE(f(q(0),v(0))=\int_{(T_1M)^\Lambda} f(q,v)\rho(q,v) dm, \quad\forall f\in\cC^0((T_1M)^\Lambda,\bR)\\
&e_x(0)=E_x>0,
\end{split}
\end{equation}
where $m$ is the Riemannian measure on $(T_1M)^\Lambda$ and $\rho\in\cC^1$.

Since the currents $L_xV$ have zero average with respect to the microcanonical measure, one expects that it will take a time of order $\ve^{-2}$ in order to see a change of energy of order one.
It is then natural to introduce the process $e_x(\ve^{-2}t)$ and to study the convergence of such a process in the limit $\ve\to 0$. 

Our main result is the following.

\begin{thmm}\label{thm:main} Provided $d\geq 3$, the process $\{e_x(\ve^{-2}t)\}$ defined by \eqref{eq:full} with initial conditions \eqref{eq:init-cond}  converges to a random process $\{\cE_x(t)\}$ with values in $\bR_+^{\Lambda}$ which satisfies the stochastic differential equation
\begin{equation}\label{eq:final-eq}
\begin{split}
&d\cE_x=\sum_{|x-y|=1}\ba(\cE_x,\cE_y)dt+\sum_{|x-y|=1}\sqrt 2\bbeta(\cE_x,\cE_y) dB_{xy}\\
&\cE_x(0)=E_x>0,
\end{split}
\end{equation}
where $B_{xy}$ are standard Brownian motions which are independent except that $$B_{xy}=-B_{yx}.$$ 

The coefficients have the following properties: $\bbeta$ is symmetric and $\ba$ is antisymmetric; $\bbeta\in\cC^0([0,\infty)^2,\bR_+)$ and $\bbeta(a,b)^2= abG(a,b)$ where $G\in\cC^\infty((0,\infty)^2,\bR_+)\cap\cC^1((0,\infty)\times [0,\infty),\bR_+)$ and $G(a,0)=A(2a)^{-\frac 32}$ for some $A>0$. Moreover,\begin{equation}\label{eq:drift-formula}
\ba=(\partial_{\cE_x}-\partial_{\cE_y})\bbeta^2+\frac{d -2}2(\cE_x^{-1}-\cE_y^{-1})\bbeta^2.
\end{equation}
In addition, \eqref{eq:final-eq} has a unique solution and the probability for one energy to reach zero in finite time is zero.
\end{thmm}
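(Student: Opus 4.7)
The plan is to adopt a cutoff--and--limit strategy. The central obstruction is that the fast flow \eqref{eq:indip-geo} moves at speed $\sqrt{2e_x}$, so at low energies both the effective time scale and the hyperbolicity constants degenerate; the general averaging results of \cite{Do2} (recalled in the first appendix) cannot be applied as a black box. The first step is to introduce, for each $\delta>0$, a modified Hamiltonian system in which the fast variables are forced to move at a speed bounded below on $\{\min_x e_x\le\delta\}$, while keeping the currents $L_x V$ essentially unchanged. Call the resulting energies $e^{(\delta)}_x(t)$; on $\{\min_x e_x\ge 2\delta\}$ they coincide with $e_x(t)$.

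For the modified process the fast motion is a product of uniformly hyperbolic contact flows on $T_1 M$, and \cite{Do2} yields convergence of $e^{(\delta)}_x(\ve^{-2}t)$ to a diffusion on $\bR_+^{\Lambda}$ whose covariance is a Green--Kubo matrix of the form
\[
\sigma^2_{xy}(\cE)=\int_{-\infty}^{\infty}\int (L_xV)\cdot (L_yV)\circ g^t\, d\mu_\cE \, dt,
\]
where $\mu_\cE$ is the product microcanonical measure at energies $\cE$. Symmetry of $\bbeta$ and antisymmetry of $\ba$ along a bond follow from the exchange $x\leftrightarrow y$ and from the fact that $L_xV(q_x,q_y)+L_yV(q_x,q_y)$ is, up to $\sqrt{2e}$ factors, the total derivative $\tfrac{d}{dt}V(q_x,q_y)$ along the unperturbed flow. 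The form $\bbeta^2(a,b)=abG(a,b)$ with the stated regularity and asymptotic $G(a,0)=A(2a)^{-3/2}$ then follows by factoring out the $\sqrt{2e_x}$ in each $L_xV$, using time reversal $v\mapsto -v$ to ensure smoothness and positivity of the spectral density, and computing the leading behaviour of the decorrelation time as one energy vanishes (the flow on that factor slows down like $\sqrt{e}$, which is responsible for the inverse power).

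The drift formula \eqref{eq:drift-formula} is most naturally read off from reversibility. On the full phase space the Liouville measure, disintegrated in the energy variables, gives $\prod_x \cE_x^{(d-2)/2}\,d\cE_x$ on the slow manifold (the exponent coming from the area of the momentum sphere $S^{d-1}$). The limit diffusion must be reversible with respect to this measure; equating to zero the formal adjoint of the generator on constants produces precisely the sum of the It\^o correction $(\partial_{\cE_x}-\partial_{\cE_y})\bbeta^2$ and the logarithmic--derivative term $\tfrac{d-2}{2}(\cE_x^{-1}-\cE_y^{-1})\bbeta^2$. With the $\cC^\infty$ regularity of the coefficients in the interior in hand, uniqueness of the SDE \eqref{eq:final-eq} on $(0,\infty)^\Lambda$ reduces to a standard local Lipschitz argument together with non--attainability of the boundary $\{\min_x \cE_x=0\}$.

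The hard part is the removal of the cutoff, which is precisely where the hypothesis $d\ge 3$ enters. Near $\cE_x=0$ the diffusion contribution from each bond incident at $x$ is of order $\sqrt{\cE_x}\cdot\cE_x^{-3/4}=\cE_x^{-1/4}$ times a smooth factor, while the drift is of order $\tfrac{d-2}{2}\cE_x^{-1}$. Thus $\cE_x$ is dominated from below by a squared--Bessel type process whose effective dimension exceeds $2$ exactly when $d\ge 3$, so Feller's boundary classification identifies $\{\cE_x=0\}$ as an entrance (hence non--attainable) boundary. A Lyapunov function $\Phi(\cE)=\sum_x \cE_x^{-\alpha}$ with $\alpha$ small positive, combined with It\^o's formula for the limiting generator, yields $\sup_{t\le T}\bE\,\Phi(\cE(t))<\infty$. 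Transferring this estimate to the prelimit by a standard tightness argument shows that for \eqref{eq:full} the probability of entering $\{\min_x e_x\le\delta\}$ before time $\ve^{-2}T$ tends to zero as $\delta\to 0$, uniformly in $\ve$. Hence $e^{(\delta)}_x(\ve^{-2}t)$ and $e_x(\ve^{-2}t)$ agree with high probability, so convergence of the modified process transfers to the original, yielding all the assertions of Theorem \ref{thm:main}.
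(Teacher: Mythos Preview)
Your overall architecture --- modify the dynamics at low energies, apply the averaging theorem of \cite{Do2} to the regularised system, read off the diffusion matrix as a Green--Kubo integral, recover the drift from reversibility with respect to the marginal of Liouville measure, and finally remove the cutoff by proving non--attainability of $\{\min_x\cE_x=0\}$ --- is exactly the route the paper takes. The paragraphs on the variance, the symmetry/antisymmetry, and the derivation of \eqref{eq:drift-formula} are fine as sketches.

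There is, however, a genuine gap in the last step. First, the asymptotics you quote are not right: from \eqref{eq:ab-asy} one has, for $\cE_x\ll\cE_y$, $\bbeta^2\sim A\cE_x(2\cE_y)^{-1/2}$ and $\ba\sim Ad(2\sqrt{2\cE_y})^{-1}$, so $\bbeta\sim\sqrt{\cE_x}$ and $\ba\sim\text{const}$, not $\cE_x^{-1/4}$ and $\cE_x^{-1}$. The relevant relation is $\ba\,\cE_x\approx\tfrac d2\bbeta^2$, which does give an effective squared--Bessel index $d$ for a single $\cE_x$ \emph{provided its neighbours stay bounded away from zero}. Second, and more seriously, the Lyapunov function $\Phi(\cE)=\sum_x\cE_x^{-\alpha}$ does not control the situation where several neighbouring energies are simultaneously small. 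If $\cE_x=\cE_y$ on a bond, antisymmetry forces $\ba(\cE_x,\cE_y)=0$ while $\bbeta^2(\cE_x,\cE_x)\sim c\sqrt{\cE_x}$, so the contribution of that bond to $\cL\Phi$ is $+\,2\alpha(\alpha+1)\cE_x^{-\alpha-2}\bbeta^2\sim\cE_x^{-\alpha-3/2}\to+\infty$; hence $\cL\Phi$ is not bounded above and the supermartingale/Gronwall step fails.

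The paper resolves this with a device you are missing: a backward induction on the size of \emph{clusters}. One defines $\tau_\delta^n=\inf\{t:\exists\,\Gamma,\ |\Gamma|=n,\ \cE_\Gamma(t)\le\delta\}\wedge T$ and proves $\bP(\tau_\delta^n<T)\to 0$ by descending $n$. On the event that no $(n{+}1)$--cluster has small total energy, every exterior neighbour of an $n$--cluster $\Gamma$ is automatically bounded below; since the interior bonds cancel in $d\cE_\Gamma$ by antisymmetry, the process $Y=\ln\cE_\Gamma$ then has nonnegative drift (this is where $d\ge 3$ enters, via the inequality $\ba\,\cE_x\ge\bbeta^2$ when $\cE_y\ge M\cE_x$) and a controlled martingale part, and a downcrossing argument finishes. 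Your single--particle Bessel heuristic is the $|\Gamma|=1$ case of this, but without the inductive hypothesis guaranteeing large neighbours it does not close.
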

\begin{rem} A direct computation shows that the measures with density $h_\beta=\prod_{x\in\Lambda} \cE_x^{\frac{d}2-1} e^{-\beta\cE_x}$ are invariant for the above process for each $\beta\in\bR_+$.
Indeed, using \eqref{eq:drift-formula}, we can write the generator of the process \eqref{eq:final-eq}  in the simple form
\[
\cL=\frac 1{2h_0}\sum_{|x-y|=1}(\partial_{\cE_x}-\partial_{\cE_y})h_0\bbeta^2(\partial_{\cE_x}-\partial_{\cE_y})
\]
from which the reversibility of the generator is evident.
\end{rem}
\begin{rem} The case $d=2$ is harder because the second term in \eqref{eq:drift-formula} 
(which otherwise would give the main contribution at small energies) is zero. 
We believe the result to be still true,\footnote{That is the fact that zero is unreachale.} but a much more detailed (and messy) analysis of \eqref{eq:final-eq} is needed to establish it. As this would considerably increase the length of section \ref{sec:zero} without adding anything really substantial to the paper, we do not pursue such matter. 

\end{rem}
\begin{rem} Note that if we could apply \cite{Va} to perform the hydrodynamics limit, then we would obtain the heat equation. Unfortunately, \eqref{eq:final-eq} does not satisfies the hypotheses of Varadhan's Theorem on several accounts; the most relevant being that the domain where the diffusion takes place is not the all space and $\ba, \bbeta$ vanish one the boundary of the domain. This is unavoidable as the energy is naturally bounded below. Nevertheless, the results of this paper can be considered as a first step along the bumpy road to obtaining the heat equation from a purely mechanical deterministic model.\footnote{One could object that geodesic motion in negative curvature is not really mechanical. Yet, it is possible to construct a bona fide mechanical system which motion is equivalent to a geodesic flow in negative curvature \cite{HM}. In any case, by Maupertuis' principle, any Hamiltonian system can be viewed as a geodesic flow, possibly on a non compact manifold.} 
\end{rem}
\begin{rem}
As a last remark, let us comment on the choice of $\bZ^d$. This is done just to simplify notations: our arguments are of a local nature, hence the structure of $\bZ^d$ does not play any role in the proof.
In particular, one can prove, with exactly the same arguments, the following extension of our result.

Consider a loopless symmetric directed graph $\bG$ determined by the collection of its vertexes $V(\bG)$ and the collection of its directed edges $E(\bG)$.\footnote{Directed means that the edges $e\in E(\bG)$ are ordered pairs $(e_1,e_2)$, $e_i\in V(\bG)$, which is interpreted as an edge going from $e_1$ to $e_2$. Symmetric means that if $(e_1,e_2)\in  E(\bG)$, then $(e_2,e_1)\in E(\bG)$. Loopless that, for each $a\in V(\bG)$, $(a,a)\not\in E(\bG)$. This abstract setting reduces to the previous one if we choose $V(\bG)=\bZ^d$ and $E(\bG)=\{(x,y)\in\bZ^d\times \bZ^d\;:\; |x-y|=1\}$.} 
At each vertex $v\in V(\bG)$ we associate a mixing geodesic flow as before, consider then the Hamiltonian
\[
H_\ve=\sum_{v\in V(\bG)}\frac 12 p_v^2+\frac \ve 2\sum_{(e_1, e_2)\in E(\bG)}V(q_{e_1},q_{e_2}).
\] 
We then have the exact analogous\footnote{In particular the condition $d\geq 3$ refers to the manifolds $M$ not to the lattice or graph.} of Theorem \ref{thm:main} for the variables $\{\cE_v\}_{v\in V(\bG)}$ with the only difference that the limiting equation now reads
\begin{equation}\label{eq:general}
\begin{split}
&d\cE_v=\sum_{(v,w)\in E(\bG)}\ba(\cE_v,\cE_w)dt+\sum_{(v,w)\in E(\bG)}\sqrt 2\bbeta(\cE_v,\cE_w) dB_{(v,w)}\\
&\cE_v(0)=E_v>0,
\end{split}
\end{equation}
where again for each $e\in E(\bG)$, the $B_e$ are independent standard Brownian motions apart form the fact that $B_{(v,w)}=-B_{(w,v)}$.
\end{rem}
An interesting application of the above Remark is the case where $\bG$ 
is a complete graph (i.e. $E(\bG)=\{(v_1,v_2)\;:\:v_1,v_2\in V(\bG)\}$) 
in which case all particles interact with each other.

The rest of the paper is devoted to proving Theorem \ref{thm:main}. Before going in details we explain exactly how the various results we are going to derive are collected together to prove the Theorem. 

\begin{proof}[{\bf Proof of Theorem \ref{thm:main}}] Fix $T>0$ and let $\bP_\ve$ be the probability measure, on the space $\cC^0([0,T], \bR_+^{\Lambda})$, associated to the process $\{e_x(\ve^{-2}t)\}_{t\in [0,T]}$ defined by \eqref{eq:full}, 
$\bP_{\ve,\delta}$ to the one defined by \eqref{eq:fullm}, $\tilde \bP_\delta$ the one associated to the process $\{e^{z(t)}\}$ with $z(t)$ defined by \eqref{eq:limit-e} and $\bP$ the one defined by \eqref{eq:final-eq}. Also, let $\Omega_\delta=\{\tau_\delta\geq T\}$ where $\tau_\delta=\inf\{t\in\bR_+\;:\;\min_{x\in\Lambda}\cE_x(t)\leq \delta\}$.  By construction, for each $F\in\cC^0$, $\bE_{\bP_\ve}(F \Id_{\Omega_\delta})=\bE_{\bP_{\ve,\delta}}(F\Id_{\Omega_\delta})$, $\bE_{\tilde\bP_\delta}(F\, \Id_{\Omega_\delta})=\bE_{\bP}(F\, \Id_{\Omega_\delta})$. 

Proposition \ref{prop:uniqueness} implies that $\bP_{\ve,\delta}\Rightarrow\tilde\bP_\delta$ and, since $\Omega_\delta$ is a continuity set for $\tilde\bP_\delta$, $\lim_{\ve\to 0}\bP_{\ve,\delta}(\Omega_\delta)=\tilde\bP_\delta(\Omega_\delta)=\bP(\Omega_\delta)$. 

Next, Lemma \ref{lem:no-zero},  based on estimate \eqref{DriftLog}, tells us that $\lim_{\delta\to 0}\tilde\bP_\delta(\Omega_\delta^c)=0.$ 
Thus
\[
\lim_{\delta\to 0}\lim_{\ve\to 0}\bP_\ve(\Omega_\delta^c)=0.
\]
Hence $\bP_\ve\Rightarrow \bP.$ 
The informations on the coefficients follow by collecting \eqref{eq:beta-final}, \eqref{eq:drift-formula} (proven in Lemma \ref{CrDriftUp}), Lemmata \ref{lem:rho-tilde} and \ref{lem:rho-asymp}. Finally, the uniqueness follows from standard results on SDE and the unreachability of zero (Lemma \ref{lem:no-zero}).
\end{proof}

\section{Heuristic}\label{sec:heu}
Let us give a sketch of the argument where we ignore all the technical difficulties and perform some daring formal computations.

If we could apply \cite[Theorem 7]{Do2} to equation \eqref{eq:full} we would obtain a limiting process characterized by an equation that, after some algebraic manipulations detailed in section \ref{sec:structure}, reads\footnote{See Appendix \ref{sec:dolgopyat} for a precise statement of the results in \cite{Do2} relevant to our purposes.}
\begin{equation}\label{eq:belief}
d\cE_x=\sum_{|x-y|=1}\ba(\cE_x,\cE_y)dt+\sum_{|x-y|=1}\sqrt 2\bbeta(\cE_x,\cE_y) dB_{xy}
\end{equation}
where $\beta(\cE_x,\cE_y)=\beta(\cE_y,\cE_x)$ is symmetric and $B_{xy}=-B_{yx}$ are independent standard Brownian motions. The marginal of the Gibbs measure on the energy variables reads
\[
d\mu_\beta=\prod_x\cE_x^{\frac{d}2-1}e^{-\beta\cE_x} d\cE_x=:h_\beta \wedge_x d \cE_x,
\]
for each $\beta\in[0,\infty)$. Hence we expect such a measure to be invariant for  \eqref{eq:belief}. Even more, on physical grounds (see Lemma \ref{lem:reversi}) 
one expects the process \eqref{eq:belief} to be reversible with respect 
to this measures. A straightforward computation shows that 
the generator associated to the above SDE reads
\[
\cL=\sum_{|x-y|=1}\ba_{xy}\partial_{\cE_x}+\frac 12\sum_{|x-y|=1}\bbeta_{xy}^2(\partial_{\cE_x}-\partial_{\cE_y})^2,
\]
where $\ba_{xy}=\ba(\cE_x,\cE_y)$, $\bbeta_{xy}=\bbeta(\cE_x,\cE_y)$. The adjoint with respect to $\mu_0$ reads
\[
\begin{split}
\cL^*=&\sum_{|x-y|=1}\left\{-\ba_{xy}+\frac{d+1}2(\cE_x^{-1}-\cE_y^{-1})\bbeta_{xy}^2+(\partial_{\cE_x}-\partial_{\cE_y})\bbeta_{xy}^2\right\}\partial_{\cE_x}\\
&+\frac 12\sum_{|x-y|=1}\bbeta_{xy}^2(\partial_{\cE_x}-\partial_{\cE_y})^2-\frac 1{h_0}\sum_{|x-y|=1}\partial_{\cE_x}(h_0\ba_{xy})\\
&+\frac 1{2h_0}\sum_{|x-y|=1}(\partial_{\cE_x}-\partial_{\cE_y})^2(h_0\bbeta_{xy}).
\end{split}
\]
Computing what it means $\cL=\cL^*$ implies \eqref{eq:drift-formula}.

\begin{rem}Note that, as expected, $\ba_{xy}=-\ba_{yx}$. Thus $d\sum_x\cE_x=0$.
\end{rem}

Going to a bit less vague level of analysis, one must notice that since $\cE_x\geq 0$, the diffusion equation \eqref{eq:belief} must be degenerate at zero, also it is not clear how regular the coefficients $\ba,\bbeta$ are. Hence, a priori, it is not even obvious that such an equation has a solution and, if so, if such a solution is unique. To investigate such an issue it is necessary to obtain some information on the behavior of the coefficients at low energies.

To this end one can use the explicit formula given in \cite[Theorem 7]{Do2} for the diffusion coefficient. This allows to verify that the coefficients are smooth away from zero. An explicit,  but lengthy,  computation yields, for $\cE_x\leq \cE_y$,
\begin{equation}\label{eq:ab-asy}
\begin{split}
&\bbeta_{xy}^2=\frac{ A\cE_x}{\sqrt{2\cE_y}}+\cO\left(\cE_x^{\frac 32}\cE_y^{-1}\right)\\
&\ba_{xy}=\frac{Ad}{2\sqrt{2\cE_y}}+\cO\left(\frac{ \sqrt{\cE_x}}{\cE_y}\right),
\end{split}
\end{equation}
see Lemma \ref{CrDriftUp} for details.
Thus, in particular, $\ba_{xy}\cE_x=\frac{d}2\bbeta_{xy}^2+o(\bbeta_{xy}^2)$.

We will see in section \ref{sec:zero} that such a relation, 
provided $d>2$, suffices to prove that the 
set $\{(\cE_x)\;:\; \prod_x\cE_x=0\}$ is unreachable 
and hence to insure that equation \eqref{eq:belief} has a unique solution.

In the rest of the paper we show how to make rigorous the above line of reasoning. 

\section{A modified dynamics}

Since the geodesic flows on manifolds of strictly negative curvature enjoy exponential decay of correlations \cite{Li1, Do1} we are in a setting very close to the one in \cite{Do2}, i.e. we have a slow-fast system in which the fast variables have strong mixing properties. 

Unfortunately, the perturbation to the geodesic flows in \eqref{eq:full} it is not small when $e_x=\cO(\ve)$, so at low energies one is bound to lose control on the statistical properties of the dynamics. The only easy way out would be to prove that the limit system spends very little time in configurations in which one particle has low energy.\footnote{To investigate low energy situations directly for the coupled geodesic flows seems extremely hard: when the kinetic energy is comparable with the potential energy all kind of uncharted behaviors, including coexistence of positive entropy and elliptic islands, could occur!}  If this were the case, then one could first introduce a modified system in which one offsets the bad behavior at small energies and then tries to remove the cutoff by showing that, in the limit process, the probability to reach very small energies is small. We will pursue precisely such a strategy.

We now define the modified process. Since our equations are Hamiltonian with Hamiltonian $H=\sum_{x\in\Lambda} \frac 12 p_x^2+\frac{\ve}2\sum_{|x-y|=1}V(q_x,q_y)$, the simplest approach is to modify the kinetic part of the Hamiltonian making it homogeneous of degree one at low velocities and decreasing correspondingly the interaction at low energies. More precisely, given any two functions $\vf,\phi\in\cC^\infty(\bR_+\setminus\{0\},\bR)$,  consider the Hamiltonians
$H_{\vf,\phi}=\sum_{x\in\Lambda} \vf(e_x)+\frac{\ve}2\sum_{|x-y|=1}\phi(e_x)\phi(e_y)V(q_x,q_y)$, which yield the equations of motion
\[
\begin{split}
&\dot q_x=\vf'(e_x) \bi(p_x)+\ve\sum_{|x-y|=1}\phi'(e_x)\phi(e_y)V(q_x,q_y) \bi(p_x)\\
&\dot p_x=\vf'(e_x) \tilde F(q_x,p_x)+\ve\sum_{|x-y|=1}\phi'(e_x)\phi(e_y)V(q_x,q_y)\tilde F(q_x,p_x)\\
&\quad\quad\quad-\ve\sum_{|y-x|=1}\phi(e_x)\phi(e_y)\;d_{q_x}V\\
&\dot e_x=-\ve\sum_{|y-x|=1}\phi(e_x)\phi(e_y)\; d_{q_x}V(\bi( p_x)),
\end{split}
\]
with $\tilde F$ has in \eqref{eq:hami-geo}.\footnote{By $d_{q_x}V$ we mean the differential of the function $V(\cdot, q_y)$ for any fixed $q_y$.}
Which, in the variables $(q_x,v_x,e_x)$, reads
\begin{equation}\label{eq:fullm}
\begin{split}
&\dot q_x=\sqrt{2e_x}\vf'(e_x) v_x+\ve\sum_{|x-y|=1}\sqrt{2e_x}\phi'(e_x)\phi(e_y)V(q_x,q_y) v_x\\
&\dot v_x=\vf'(e_x) \sqrt{2e_x}F(q_x,v_x)+\ve\bigg\{\sum_{|x-y|=1}\phi'(e_x)\phi(e_y)\sqrt{2e_x}V(q_x,q_y)F(q_x,v_x)\\
&\quad\quad-\sum_{|y-x|=1}\frac{\phi(e_x)\phi(e_y)}{\sqrt{2e_x}}\nabla_{q_x}V
+\sum_{|y-x|=1}v_x\frac{\phi(e_x)\phi(e_y)}{\sqrt{2e_x}}\;d_{q_x}V(v_x)\bigg\}\\
&\dot e_x=-\sum_{|y-x|=1}\phi(e_x)\phi(e_y)\sqrt{2e_x}\;d_{q_x}V(v_x),
\end{split}
\end{equation}
with $F$ as in \eqref{eq:indip-geo}.

Since $\frac{d}{dt}v_x^2=\ve (v_x^2-1)\sum_{|y-x|=1}\frac{\phi(e_x)\phi(e_y)}{\sqrt{2e_x}}d_{q_x}V(v_x)$, the manifold $v_x^2=1$ is an invariant manifold for the equations \eqref{eq:fullm}, thus such equations determine a flow in the variables $(\xi_x, e_x)=(q_x,v_x, e_x)\in T^1M\times \bR_+$.

Finally, we chose $\vf=\vf_\delta$ and $\phi=\phi_\delta$ such that, for all $\delta>0$,
\begin{equation}\label{eq:fifi}
\vf_\delta(s)=\begin{cases}  s \quad&\text{ if } s\geq \delta\\
                                 2\sqrt{\delta s}&\text{ if } s\leq \frac \delta {8}
\end{cases}
\quad;\quad
\phi_\delta(s)=\frac{1}{\vf_\delta'(s)}=\begin{cases}  1 \quad&\text{ if } s\geq \delta\\
                                 \frac{\sqrt{s}}{\sqrt \delta}&\text{ if } s\leq \frac \delta {8},
\end{cases}
\end{equation}
where $\phi_\delta$ is increasing.

We denote the solution of the above equations \eqref{eq:fullm} with initial conditions $(\xi,e)$
by $(\xi^{\ve,\delta}(t), e^{\ve,\delta}(t)).$ 

Our goal is to apply \cite[Theorem 7]{Do2} to the 
flow $(\xi^{\ve,\delta}(t), e^{\ve,\delta}(t))$, 
see Appendix \ref{sec:dolgopyat} for a simplified 
statement (Theorem \ref{thm:dolgopyat}) adapted to our needs.  
Before discussing the applicability of this a Theorem, there is one 
last issue we need to take care of: the equation for $e$ is clearly degenerate at low energies, this is related to the fact that the energies in \eqref{eq:fullm} are strictly positive for all times if they are strictly positive at time zero.\footnote{Indeed, the equation for the energy can be written, near zero, as $\dot e_x=-\ve e_x G(e_{\neq x},\xi)$, where $G$ is a bounded function, hence the solution has the form $e_x(t)=e_x(0) e^{-\ve \int_0^tG(e_{\neq x}(s),\xi(s))ds}$.} This may create a problem in the limiting process that is bound to have a degenerate diffusion coefficient.  To handle this problem it turns out to be much more convenient to use the variables $z_x=\ln e_x$.
In this new variables we finally have the equations we are looking for
\begin{equation}\label{eq:fullm2}
\begin{split}
&\dot q_x=\omega_\delta(z_x) v_x+\frac{\ve}2\sum_{|x-y|=1}\zeta_\delta(z_x)\phi_\delta(e^{z_y})V(q_x,q_y) v_x\\
&\dot v_x=\omega_\delta(z_x)F(q_x,v_x)+\frac{\ve}2\sum_{|x-y|=1}\zeta_\delta(z_x)\phi_\delta(e^{z_y})V(q_x,q_y)F(q_x,v_x)\\
&\quad\quad-\frac{\ve}{\sqrt 2}\sum_{|y-x|=1}e^{-\frac{z_x}2}\phi_\delta(e^{z_x})\phi_\delta(e^{z_y})\nabla_{q_x}V(q_x,q_y)\\
&\quad\quad+\frac{\ve}{\sqrt 2}\sum_{|y-x|=1}v_xe^{-\frac{z_x}2}\phi_\delta(e^{z_x})\phi_\delta(e^{z_y})L_{x}V(\xi_x,\xi_y)\\
&\dot z_x=-\ve\sqrt 2\sum_{|y-x|=1}e^{-\frac{z_x}2}\phi_\delta(e^{z_x})\phi_\delta(e^{z_y})L_{x}V(\xi_x,\xi_y),
\end{split}
\end{equation}
Where $L_x$ is as in equation \eqref{GenGF} and 
\begin{equation}\label{eq:cut-off}
\begin{split}
&\omega_\delta(z)=\sqrt{2}e^{\frac{z}2}\vf_\delta'(e^{z})=\begin{cases} \sqrt{2}e^{\frac{z}2}\quad& \text{ if } z\geq \ln\delta\\
                                                                                                \sqrt{2 \delta}& \text{ if } z\leq \ln\delta-\ln 8
                                                                           \end{cases}\\
&\zeta_\delta(z)=\sqrt{2}e^{\frac{z}2}\phi_\delta'(e^{z})=\begin{cases} 0\quad& \text{ if } z\geq \ln\delta\\
                                                                                               \frac{1}{\sqrt{2\delta}}& \text{ if } z\leq \ln\delta-\ln 8
                                                                           \end{cases}                                                                                                                                             
\end{split}
\end{equation}

\begin{rem}\label{rem:vector}
Note that we can chose $\omega_\delta\geq \sqrt \delta$ and $\zeta_\delta\geq 0$ decreasing.\footnote{Indeed,
\[
\phi_\delta(s)=1-\int_{\min\{s,\delta\}}^{\delta}\;\frac{\zeta_\delta(\ln x)}{\sqrt {2x}}\,dx\,.
\]
Remark that once $\zeta_\delta$ is chosen all the functions are fixed.}
In addition, it is possible to arrange that $|\omega_\delta|_{\cC^r(I_L,\bR)}\leq C_r e^L$, where $I_L=(-\infty, 2L)$, and  $|\zeta_\delta|_{\cC^r(\bR,\bR)}\leq C_r\delta^{-\frac 12}$, for each $r\in\bN, L, \delta\in\bR_+$. We will assume such properties in the following.
\end{rem}
Since the total energy is conserved, we can consider equations \eqref{eq:fullm2} on the set $(T^1M)^{\Lambda}\times(-\infty, L]^{\Lambda}$ for some $L>0$. Hence, by the above remark together with the \eqref{eq:fifi}, the vector field in \eqref{eq:fullm2} has bounded $\cC^r$ norm, as a function of $x,z,\ve$, for each $r\in\bN$.

Let  $\tilde f^\delta(\xi,z,\ve,\delta)=\xi^{\ve,\delta}(1)$, 
$F_{\ve,\delta}(\xi,z)=(\xi^{\ve,\delta}(1),z^{\ve,\delta}(1))$, and
\begin{equation}\label{eq:adelta}
A^\delta_x(\xi,z,\ve)=-\sqrt 2\!\int_0^1\!\!\!\sum_{|x-y|=1}\!\!\!\!\!e^{-\frac {z_x(\tau)}2}\phi_\delta(e^{z_x(\tau)})\phi_\delta(e^{z_y(\tau)})L_xV(\xi^{\ve,\delta}_x(\tau), \xi^{\ve,\delta}_y(\tau)) d\tau
\end{equation}
 then
\begin{equation}\label{eq:atlast}
F_{\ve,\delta}(\xi,z)=\left(\tilde f^\delta(\xi,z,\ve), z+\ve A^\delta(\xi,z,\ve)\right).
\end{equation}
\begin{lem}\label{lem:dynamics-prop}
Setting $\tilde F_\delta(x,z,\ve)=F_{\ve,\delta}(x,z)$ we have, for each $\delta\in(0,1), L>0$,  $\tilde F_\delta\in \cC^\infty((T^1M)^{\Lambda}\times (-\infty,L]^{\Lambda}\times [0,1])$, and $\|A^\delta(\cdot,\cdot,\ve)\|_{\cC^r((T^1M)^{\Lambda}\times (-\infty,L]^{\Lambda})}\leq C_{r,\delta}$, for each $r\in\bN, \ve\in[0,1]$. In addition, for each $\beta\in\bR_+$, the probability measure 
\[
\begin{split}
&d\mu_{\delta,\ve,\beta}=\tilde Z_\beta^{-1} e^{-\beta\tilde H_{\delta,\ve}+\sum_x\frac{d}2z_x}dqdv dz\\
& \tilde H_{\delta,\ve}(q,\nu,z)=\sum_{x\in\Lambda}\vf_\delta(e^{z_x})+\frac{\ve}2\sum_{|x-y|=1}\phi_\delta(e^{z_x})\phi(e^{z_y})V(q_x,q_y),
\end{split}
\]
is invariant for $F_{\ve,\delta}$. Moreover, for each $\bar z\in\bR^d$ and sub-manifold $\Sigma_{\bar z}:=\{z_x=\bar z_x\}$, the Dynamical System $(\Sigma_{\bar z},F_{0,\delta})$ has a unique SRB measure $\mu_{\bar z}$.
\end{lem}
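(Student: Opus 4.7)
The statement breaks into four separate assertions which I would verify in turn; none requires new ideas beyond what has already been set up.

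For the $\cC^\infty$ regularity of $\tilde F_\delta$ and the $\cC^r$ bound on $A^\delta$: the sentence immediately preceding the lemma, together with Remark \ref{rem:vector} and \eqref{eq:fifi}, shows that the right-hand side of \eqref{eq:fullm2} is $\cC^\infty$ in $(q,v,z,\ve)$ with all partial derivatives bounded uniformly on $(T^1M)^\Lambda\times(-\infty,L]^\Lambda\times[0,1]$ by constants depending only on the order of differentiation, $\delta$, and $L$. Standard smooth dependence of ODE solutions on initial data and parameters then yields $\tilde F_\delta\in\cC^\infty$ on the same domain, and the representation \eqref{eq:atlast} also gives $A^\delta\in\cC^\infty$. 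For the $\cC^r$ bound I would differentiate \eqref{eq:adelta} under the integral: each derivative produces factors of $\phi_\delta$, $\phi_\delta'$, $e^{-z_x/2}$, $L_xV$, and derivatives of the trajectory $(\xi^{\ve,\delta}(\tau),z^{\ve,\delta}(\tau))$ with respect to the initial point. The former are bounded in terms of $\delta$ by Remark \ref{rem:vector} together with $z_x\leq L$, while the latter are controlled by Grönwall applied to the variational equation on the fixed interval $[0,1]$.

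For the invariance of $\mu_{\delta,\ve,\beta}$: equations \eqref{eq:fullm2} are, by construction, the rewriting in coordinates $(q,v,z)$ of the Hamiltonian flow generated by $\tilde H_{\delta,\ve}$ in coordinates $(q,p)$. Hence Liouville's theorem gives invariance of the symplectic volume $dq\,dp$, conservation of $\tilde H_{\delta,\ve}$ gives invariance of $e^{-\beta\tilde H_{\delta,\ve}}dq\,dp$, and the polar-type change of variables $p_x=\sqrt{2e^{z_x}}\,\bi^{-1}(v_x)$ in each cotangent fibre has Jacobian $2^{d/2-1}e^{(d/2)z_x}$ relative to $dq_x\,dv_x\,dz_x$ (standard spherical decomposition of $\bR^d$). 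Multiplying over $x\in\Lambda$ produces exactly the factor $\exp\bigl(\tfrac{d}{2}\sum_x z_x\bigr)$ in $\mu_{\delta,\ve,\beta}$, up to the overall constant absorbed into $\tilde Z_\beta^{-1}$.

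For the unique SRB measure on $\Sigma_{\bar z}$ when $\ve=0$: setting $\ve=0$ in \eqref{eq:fullm2} decouples the system completely. Each $z_x$ is conserved, and on each factor $T^1M$ one obtains the generator of the geodesic flow multiplied by the positive constant $\omega_\delta(\bar z_x)\geq \sqrt{\delta}$, i.e.\ a constant time reparametrization of the geodesic flow. Since $M$ has strictly negative sectional curvature, this is a $\cC^\infty$ mixing Anosov flow preserving the normalized Liouville measure on $T^1M$; by classical Bowen--Sinai--Ruelle theory (combined with mixing of the flow, used to rule out invariant measures supported on flow-orbit periodic sets) its time-one map admits this Liouville measure as its unique SRB measure. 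Taking the product of these measures over $x\in\Lambda$ gives a unique SRB measure $\mu_{\bar z}$ for $F_{0,\delta}$ on $\Sigma_{\bar z}$. The only point requiring care is the $\delta$-dependence of $C_{r,\delta}$, which does blow up as $\delta\to 0$ because $\zeta_\delta\sim\delta^{-1/2}$ in $L^\infty$; however, the lemma only claims bounds at each fixed $\delta$, which suffices for its subsequent use.
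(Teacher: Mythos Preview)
Your proposal is correct and follows essentially the same route as the paper's own proof: smoothness and the $\cC^r$ bound from Remark~\ref{rem:vector} plus standard ODE dependence on data, invariance from the Hamiltonian Gibbs measure in $(q,p)$ pushed through the change of variables $p_x\mapsto(v_x,z_x)$, and the unique SRB on $\Sigma_{\bar z}$ from the product structure of time-$\omega_\delta(\bar z_x)$ geodesic maps together with mixing. You supply somewhat more detail than the paper (the explicit Jacobian and the Gr\"onwall/variational-equation remark), but the argument is the same.
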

\begin{proof}
The first part of the statement follows from Remark \ref{rem:vector} and subsequent comments together with standard results of existence of solutions and smooth dependence on the initial data from O.D.E.. The bound on $A^\delta$ is then immediate from formula \eqref{eq:adelta}.

By the Hamiltonian nature of the equations \eqref{eq:fullm} the measures 
\[
d\mu_{\delta,\beta}=Z_\beta^{-1} e^{-\beta H_{\vf_\delta,\phi_\delta}}dqdp\,,
\]
are invariant for the associated dynamics for each $\beta>0$. By changing variables we obtain the statement of the Lemma. 

Finally, calling $\tilde \mu$ the Riemannian measure on $T^1M$ we have that $\mu_{\bar z}=\otimes^{|\Lambda|}\tilde \mu$ is a SRB measure for the map $\xi \mapsto \tilde f^\delta(\xi,z,0)$, which turn out to be the product of the time $\omega_\delta(z_x)$ maps of the geodesic flow on $T^1M$. The uniqueness of the SRB follows by the mixing of the geodesic flows \cite{AS} and the fact that the product of mixing systems is mixing.
\end{proof}

\section{Existence of the limit: $\delta>0$}\label{sec:existence}
We are finally ready to consider the limit $\ve\to 0$, for the modified dynamics.

\begin{prop}\label{lem:itapplies}
For each $\delta\in (0,1)$ there exists $\ve_\delta>0$ such that the Dynamical System defined by \eqref{eq:atlast} satisfies the hypotheses of Theorem \ref{thm:dolgopyat} for $\ve\in[0,\ve_\delta]$. 

Hence, the family $z^{\ve,\delta}(\ve^{-2}t)$ is tight and its weak accumulation points are a solution of the Martingale problem associated to the stochastic differential equation
\begin{equation}\label{eq:limit}
\begin{split}
d z_x^\delta&=a^\delta_x(z^\delta)dt+\sum_{y}\sigma_{xy}^\delta(z^\delta) dB_y,\\
z_x^\delta(0)&=\bar z_x
\end{split}
\end{equation}
where
\begin{equation}\label{eq:variance}
\begin{split}
&(\sigma^\delta)^2_{xy}(z)=\sum_{n\in\bZ}\int_{(T^1M)^\Lambda}A^\delta_x((\tilde f^\delta)^n(\xi, z, 0)A^\delta_y(\xi,z,0) d\mu_z\\
&=2\int_{-\infty}^{+\infty}\!\!\!\!\! dt\sum_{\substack{ |x-w|=1\\ |y-w'|=1}}\frac{\phi_\delta(e^{z_x})\phi_\delta(e^{z_w})\phi_\delta(e^{z_y})\phi_\delta(e^{z_{w'}})}{e^{\frac{z_x+z_y}2}}\\
&\quad\quad\quad\quad\times\bE\left(L_xV(\xi_x^{0,\delta}(t), \xi_w^{0,\delta}(t))\cdot L_yV(\xi_y, \xi_{w'})\right).
\end{split}
\end{equation}
Here $\bE$ is the expectation with respect to $\mu_{z}$ and $\|a^\delta\|_{\cC^0}+\|(\sigma^\delta)^2\|_{\cC^1}<\infty$. 
\end{prop}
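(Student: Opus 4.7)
The plan is to verify, for $\ve$ small enough depending on $\delta$, that the time-one map $F_{\ve,\delta}$ from \eqref{eq:atlast} fits the framework of Theorem~\ref{thm:dolgopyat}, and then to read off both the tightness of the rescaled slow process and the Green--Kubo representation of its limiting diffusion. The factorisation \eqref{eq:atlast} already exhibits the required split into a fast part $\tilde f^\delta$ and a slow increment $\ve A^\delta$, and Lemma~\ref{lem:dynamics-prop} supplies the uniform-in-$\ve$ smoothness bound $\|A^\delta(\cdot,\cdot,\ve)\|_{\cC^r}\leq C_{r,\delta}$ as well as the $\cC^\infty$ dependence of $\tilde F_\delta$ on $(\xi,z,\ve)$ on the restricted phase space $(T^1M)^\Lambda\times (-\infty,L]^\Lambda$.

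First I would check the partial hyperbolicity and mixing of the unperturbed dynamics on each slow leaf $\Sigma_{\bar z}$. When $\ve=0$, $\tilde f^\delta(\cdot,\bar z,0)$ is the product over $x\in\Lambda$ of the time-$\omega_\delta(\bar z_x)$ maps of the geodesic flow on $T^1M$; since $\omega_\delta(\bar z_x)\geq\sqrt{\delta}>0$, each factor is a genuine segment of an Anosov flow, and the exponential decay of correlations for $\cC^r$ observables from \cite{Li1, Do1} gives exponential mixing of the product, with a rate and constants that are uniform in $\bar z$ ranging over any compact subset of $\bR^\Lambda$. Next, I would verify that $A^\delta$ averages to zero on each leaf: for every $x$ the integrand $L_xV(\xi_x,\xi_y)$ is the Lie derivative of $V$ along the geodesic-flow direction in the $x$ coordinate, and the Liouville measure $\mu_{\bar z}=\otimes_{x}\tilde\mu$ is invariant under that flow, so $\int L_xV\,d\mu_{\bar z}=0$. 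Together with the smoothness and boundedness already established, these are exactly the hypotheses required in the Appendix, so Theorem~\ref{thm:dolgopyat} applies for $\ve\in[0,\ve_\delta]$ and yields tightness of $z^{\ve,\delta}(\ve^{-2}t)$ together with the fact that any subsequential limit solves the martingale problem for \eqref{eq:limit}.

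It then remains to identify $(\sigma^\delta)^2$. The generic Green--Kubo formula supplied by Theorem~\ref{thm:dolgopyat} is precisely the first line of \eqref{eq:variance},
\[
(\sigma^\delta)^2_{xy}(z)=\sum_{n\in\bZ}\int A^\delta_x\bigl((\tilde f^\delta)^n(\xi,z,0)\bigr)\,A^\delta_y(\xi,z,0)\,d\mu_z(\xi).
\]
To convert this discrete sum into the continuous-time integral in the second line, I would substitute the definition \eqref{eq:adelta} of $A^\delta$, use Fubini (legitimate by exponential decay of correlations, which makes all series absolutely summable), observe that along the unperturbed orbit the slow variables $z$ are frozen so the prefactors $e^{-z_x/2}\phi_\delta(e^{z_x})\phi_\delta(e^{z_w})$ may be pulled outside the integral, and then recognise the double time integration $\int_0^1\!\int_0^1$ together with the lattice sum over $n$ as a single integral of the stationary correlation function over $t\in\bR$. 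The $2$ in front of the integral comes from the factor $\sqrt{2}$ in \eqref{eq:adelta} squared. The bound $\|(\sigma^\delta)^2\|_{\cC^1}<\infty$ then follows from the exponential mixing estimates differentiated in $z$ via the smooth dependence of $(\tilde f^\delta,\mu_z)$ on the parameter $z$, and $\|a^\delta\|_{\cC^0}<\infty$ from the same input.

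The main obstacle is the uniformity of the mixing estimates in the parameter $\bar z$: the theorem in the Appendix requires not just decay of correlations on each individual slow leaf but bounds that are uniform enough in $\bar z$ to differentiate in $z$ and to control the perturbative expansion in $\ve$. This is precisely where the cutoff \eqref{eq:fifi} earns its keep: because $\omega_\delta\geq\sqrt{\delta}$ every geodesic-flow factor is genuinely running at bounded-below speed, so the mixing rates of $(\tilde f^\delta)^n$ and of its derivatives in $z$ are controlled uniformly in $\bar z$ lying in any compact region, and this uniformity both fixes $\ve_\delta$ as a function of $\delta$ only and makes the error terms in \cite[Theorem~7]{Do2} tend to zero as $\ve\to 0$.
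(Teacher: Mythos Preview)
Your outline is correct and follows the same overall strategy as the paper: verify the hypotheses of Theorem~\ref{thm:dolgopyat} using Lemma~\ref{lem:dynamics-prop} for smoothness, the product structure of the unperturbed map together with the lower bound $\omega_\delta\geq\sqrt\delta$ for uniform exponential mixing, and then read off the Green--Kubo formula.

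Two points of comparison are worth noting. First, your argument for $\mu_z(A^\delta(\cdot,z,0))=0$ is different from the paper's and in fact simpler: you observe that $L_xV$ is the generator applied to $V$, so its $\mu_z$-average vanishes by invariance of the Liouville measure under the geodesic flow; the paper instead uses the time-reversal involution $\Theta(q,v)=(q,-v)$ and the antisymmetry of $L_xV$ under $\Theta$ to reach the same conclusion. Both are valid; yours is more direct here, while the $\Theta$-symmetry is reused later in the paper (Lemma~\ref{lem:reversi}). Second, the paper is more explicit on one technical point that you only allude to: the exponential decay of correlations in \cite{Li1,Do1} is stated for smooth densities, whereas Theorem~\ref{thm:dolgopyat} requires decay uniformly over \emph{standard pairs}. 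The paper bridges this by thickening a standard pair $(D,\rho)$ to a smooth density $\rho_\ve$ in an $\ve$-neighbourhood, applying \eqref{eq:flow-dec}, and optimising over $\ve$; your proposal would benefit from at least flagging that this translation has to be made.
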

\begin{proof}
First of all notice that the 
hypotheses on the smoothness of $F_{\ve,\delta}$ 
and the boundedness of $A^\delta$ are insured by Lemma \ref{lem:dynamics-prop}. 
Next, notice that $F_{0,\delta}(\xi,z)=(f_z^\delta(\xi), z)$ 
with $f_z^\delta(\xi)_x=g^{\omega_\delta(z_x)}(\xi_x)$, 
where $g^t$ is the geodesic flow on the unit tangent bundle $T^1M$, 
thus the $f_z^\delta$ are FAE.\footnote{FAEs are defined in 
Appendix \ref{sec:dolgopyat}. 
In our case, the abelian action is the one determined by 
the geodesic flows themselves, $\times_{i\in\Lambda}g^{t_i}.$
}  

Also we have that $\mu_{z}(A^\delta(\cdot,  z, 0))=0$. 
This follows by considering the transformation $\Theta(q,v)=(q, -v)$. 
Indeed $\Theta_*\mu_{z}=\mu_{z}$ while, 
the flow $\Psi^t_{\delta,\ve}$ associated to \eqref{eq:fullm2} 
satisfies 
$\Psi^t_{\delta,\ve}\circ \Theta= \Theta\circ \Psi^{-t}_{\delta,\ve}$. 
On the other hand, using the antisymmetry of $L_xV$ with respect to $v_x$,
\[
\begin{split}
A_x^\delta(\Theta(\xi), z,0)=&-\sqrt 2\int_0^1\sum_{|x-y|=1}e^{-\frac {z_x}2}\phi_\delta(e^{z_x})\phi_\delta(e^{z_y})L_xV\circ\Psi_{\delta,0}^\tau\circ\Theta(\xi) d\tau\\
=&\sqrt 2\int_0^1\sum_{|x-y|=1}e^{-\frac {z_x}2}\phi_\delta(e^{z_x})\phi_\delta(e^{z_y})L_xV\circ\Psi_{\delta,0}^{-\tau}(\xi) d\tau\\
=&-A_x^\delta(\Psi_{\delta,0}^{-1}(\xi), z,0).
\end{split}
\]
Thus $\mu_{z}(A^\delta(\cdot,  z, 0))=\mu_{z}(A^\delta(\Theta(\cdot),  z, 0))=-\mu_{z}(A^\delta(\Psi_{\delta,0}^{-1}(\cdot),  z, 0))=-\mu_{z}(A^\delta(\cdot,  z, 0))$, by the invariance of the measure.

The last thing to check is the uniform decay of correlation. Since $\omega_\delta\geq \sqrt\delta$, the results in \cite{Do1,Li1} imply\footnote{\cite{Do1} 
proves the exponential decay of correlations 
for geodesic flows on negatively curved surfaces,
\cite{Li1} extends the results to any negatively curved manifold.} 
that the $f_z$ are FAE with uniform exponential decay of correlation. In fact, in Theorem \ref{thm:dolgopyat} the decay of correlations is meant in a very precise technical sense. To see that the results in \cite{Li1} imply the wanted decay we must translate them into the language of standard pairs in which it is formulated Theorem \ref{thm:dolgopyat}. 
Let us start by stating the result in \cite{Li1}: let $g^a$ be the time $a$ map of the geodesic flow on the unit tangent bundle. For each smooth function $A$ let $\|A\|_s=\|A\|_\infty+\|\partial^s A\|_\infty$ where $\partial^s$ is the derivative in the weak stable direction. Then there exists $C,c>0$ such that, for each $z$ and $\rho, A\in\cC^1$,  holds true
\begin{equation}\label{eq:flow-dec}
\left|\bE(\rho\cdot A\circ \tilde g^{an})-\bE(A)\bE(\rho)\right|\leq C\|\rho\|_{\cC^1}\|A\|_s e^{-c an}.
\end{equation}
Since, setting $f^\delta_z(\xi)=\tilde f^\delta(\xi,z,0)$, $f^\delta_z=\times_{x} g^{\omega_\delta(z_x)}$, and $\omega_\delta$ is uniformly bounded from below, for $\bE(A)=0$, it follows (suppressing, to ease notation, the superscript $\delta$)\footnote{Just note that one can write
$\bE(\rho\cdot A\circ f_z^n)=\bE(\bE(\rho\cdot A\circ f_z^n\;|\; \xi_{y\neq x}))$ and that the relevant norms of $\rho_{\xi_{y\neq x}}(\xi_x)=\rho(\xi_x, \xi_{y\neq x})$ and $A_{\xi_{y\neq x}}(\xi_x)=A(\xi_x , f_z^n(\xi_{y\neq x}))$ are bounded by the full norms of $\rho$ and $A$. One can then apply \eqref{eq:flow-dec} to $\bE(\rho\cdot A\circ f_z^n\;|\; \xi_{y\neq x})=\bE(\rho_{\xi_{y\neq x}}A_{\xi_{y\neq x}}\circ \tilde f_{\omega_\delta(z_x)}^n)$. Proceeding in such a way one variable at a time yields the result.}
\begin{equation}\label{eq:flow-prod}
\left|\bE(\rho\cdot A\circ f_z^n)\right|\leq C|\Lambda|\,\|\rho\|_{\cC^1}\|A\|_s e^{-can}.
\end{equation}
To see that this is stronger than needed, consider a standard pair $\ell=(D,\rho)$.\footnote{Recall that $D$ is a manifold of fixed size close to the strong unstable one and $\rho$ a smooth density on it.} One can smoothly foliate a $\ve$ neighborhood of $D$ and define a probability density $\rho_\ve$ supported in it such that $\|\rho_\ve\|_{\cC^1}\leq C\ve^{-2}$, while $\|\rho_\ve\|_{\cC^1}\leq C$ when $\rho_\ve$ is restricted to a leaf of the foliation.
Thanks to the $\alpha$-H\"older regularity and the absolute continuity of the weak stable foliation, one can take $\rho_\ve$ so that
\[
\left|\bE_\ell(A)-\bE(\rho_\ve A)\right|\leq C\ve^\alpha \|A\|_{s}.
\]
Accordingly,
\[
\begin{split}
\left|\bE_\ell( A\circ f_z^n)\right|&\leq \left|\bE(\rho_\ve\cdot A\circ f_z^n)\right|+C\ve^\alpha\|A\circ f_z^n\|_s
\leq C\left\{\ve^{-2}e^{-can}+\ve^\alpha \right\}\|A\|_{s}\\
&\leq Ce^{-\frac {\alpha c an}{2+\alpha}}\|A\|_{\cC^1},
\end{split}
\]
where, in the last equality, we have chosen $\ve=e^{-\frac {c an}{2+\alpha}}$. 
Thus, all the hypotheses of Theorem \ref{thm:dolgopyat} are satisfied and \eqref{eq:variance} follows by a direct computation.
\end{proof}

By Theorem \ref{thm:dolgopyat}(b), in order to 
prove that $z^{\ve,\delta}(\ve^{-2}t)$ has a limit it 
suffices to prove that \eqref{eq:limit} has a unique solution. 
This would follow by standard results if we knew that $a^\delta$ 
is locally Lipschitz. In fact, \cite{Do2} provides also an explicit formula 
for $a^\delta.$ Unfortunately this formula is much more complex than 
the formula for the variance and is quite difficult to investigate. 
We will avoid a direct computation of $a_\delta$ and we will instead use the knowledge of the invariant measure to determine it. Before doing that a deeper understanding of the variance is required.

\section{Computing the variance}\label{sec:compute}
Let $g^t$ be the geodesic flow on the unit cotangent bundle of $M$. As already noted, for each function $h$, $h(\xi_x^{0,\delta}(t))=h\circ g^{\omega_\delta(z_x)t}(\xi_x)$ for all $x\in\Lambda$. For convenience let us set $\varpi_x:=\omega_\delta(z_x)$. Also, it turns out to be useful to define two functions of two variables: consider two geodesic flows on $T^1M$, let $(\xi,\eta)$ be the variables of the two flows respectively,  $\bE$ the expectation with respect to the Riemannian volume on $(T^1M)^2$ and $L_1, L_2$ the generators associated to the geodesic flow of $\xi$ and $\eta$ respectively, then we define $\rho,\tilde\rho:\bR^2\to\bR$ by
\begin{equation}\label{eq:block}
\begin{split}
\rho(a,b)&:=\int_{-\infty}^\infty dt\; \bE\left(L_1V(g^{at}(\xi), g^{bt}(\eta))\cdot L_{1}V(\xi, \eta)\right), \\
\tilde\rho(a,b)&:=\int_{-\infty}^\infty dt\; \bE\left(L_1V(g^{at}(\xi), g^{bt}(\eta))\cdot L_2V(\xi, \eta)\right).
\end{split}
\end{equation}
Also, it is convenient to define 
\begin{equation}\label{eq:rhoxy}
\rho_{xy}:=\rho(\omega_\delta(z_x),\omega_\delta(z_y)), \quad \tilde \rho_{xy}:=\tilde\rho(\omega_\delta(z_x),\omega_\delta(z_y)).
\end{equation}

Indeed, the understanding of the variance will be reduced shortly  to understanding the properties of $\rho_{xy}$. Here is a list of relevant properties whose proof can be found in Appendix \ref{app:rho}.
\begin{lem}\label{lem:rho-tilde}
The function $\tilde \rho$ is non-positive and $\cC^\infty$ for $a,b>0$.
In addition, for each $a,b, \lambda>0$ we have $\tilde\rho(a,b)=\tilde\rho(b,a)$ and  $\rho(\lambda a,\lambda b)=\lambda^{-1}\rho(a,b)$.
Finally, $\tilde\rho(a,b)=-\frac{a}{b}\rho(a,b)$.
\end{lem}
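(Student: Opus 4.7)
The proof decomposes into five statements that I would address in order of increasing difficulty. First the scaling identity $\rho(\lambda a,\lambda b)=\lambda^{-1}\rho(a,b)$ is immediate from the substitution $s=\lambda t$ in the time integral. For the symmetry $\tilde\rho(a,b)=\tilde\rho(b,a)$, I would introduce the swap involution $\Sigma(\xi,\eta)=(\eta,\xi)$: the product Riemannian volume on $(T^1M)^2$ is $\Sigma$-invariant and, by the symmetry of $V$, one checks that $L_1V=L_2V\circ\Sigma$ and $L_2V = L_1V\circ\Sigma$. Combining $\Sigma$ with the time reversal $t\mapsto -t$ and the invariance of $\bE$ under $\phi^t_{a,b}:=(g^{at},g^{bt})$ converts the integral defining $\tilde\rho(a,b)$ into that defining $\tilde\rho(b,a)$.

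The crucial step is the identity $\tilde\rho(a,b)=-\frac{a}{b}\rho(a,b)$, which I would prove by a ``generator trick''. The full generator of $\phi^t_{a,b}$ is $aL_1+bL_2$, and
\[
a\rho(a,b)+b\tilde\rho(a,b)=\int_{-\infty}^{\infty}\bE\!\left(L_1V\circ\phi^t_{a,b}\cdot(aL_1+bL_2)V\right)dt.
\]
Rewriting $(aL_1+bL_2)V=\frac{d}{ds}V\circ\phi^s_{a,b}\big|_{s=0}$ and using the invariance of $\bE$ to get $\bE(L_1V\circ\phi^t\cdot V\circ\phi^s)=\bE(L_1V\circ\phi^{t-s}\cdot V)$, the integrand becomes $-\frac{d}{dt}\bE(L_1V\circ\phi^t_{a,b}\cdot V)$. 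The integral therefore collapses to the boundary values $\lim_{t\to\pm\infty}\bE(L_1V\circ\phi^t_{a,b}\cdot V)$. Both limits equal $\bE(L_1V)\bE(V)=0$ by the exponential decay of correlations for the product geodesic flow (lifted from \cite{Li1} factor-by-factor, exactly as in the proof of Proposition \ref{lem:itapplies}) together with $\bE(L_1V)=0$. Hence $a\rho+b\tilde\rho=0$.

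The non-positivity of $\tilde\rho$ then follows from this identity together with $\rho(a,b)\geq 0$. The latter is the Kipnis--Varadhan variance
\[
\rho(a,b)=\lim_{T\to\infty}\frac{1}{T}\bE\!\left(\Big|\int_0^T L_1V\circ\phi^t_{a,b}\,dt\Big|^2\right)\geq 0,
\]
or equivalently the non-negativity of the spectral density at zero of the positive-definite correlation $t\mapsto \bE(L_1V\circ\phi^t_{a,b}\cdot L_1V)$.

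Finally, for $\cC^\infty$ regularity on $(0,\infty)^2$, since $\tilde\rho=-(a/b)\rho$ it suffices to treat $\rho$. The flow $\phi^t_{a,b}$ is smooth in $(a,b)$ with $\partial_a\phi^t=tL_1\phi^t$ (and analogously for $b$), so each differentiation in $(a,b)$ inserts a factor of $t$ under the integral and upgrades one $L_1V$ factor to an iterated derivative $L_1^jL_2^kV$, which still has zero $\bE$-average. The resulting correlation integrals $\int_{-\infty}^{\infty} t^m\,\bE(L_1^{j+1}L_2^kV\circ\phi^t_{a,b}\cdot L_1V)\,dt$ converge absolutely and uniformly on compact subsets of $(0,\infty)^2$, justifying differentiation under the integral sign at every order. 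The main obstacle lies precisely in this last step: one cannot invoke a naive $\cC^1$--$\cC^1$ decay of correlations because the powers of $t$ eat into the exponential gain, so one must genuinely use the weak-stable derivative norm $\|\cdot\|_s$ from \cite{Li1}, together with the product-structure argument already carried out in Proposition \ref{lem:itapplies}, to close the estimates.
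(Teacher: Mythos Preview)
Your proof is correct and follows essentially the same route as the paper: the scaling by change of variables, the symmetry via the swap of the two factors (the paper merely says ``change of variables''), the key identity by writing $(aL_1+bL_2)V$ as a total time derivative and using mixing to kill the boundary terms, and the non-positivity via $\tilde\rho=-(a/b)\rho$ with $\rho\ge 0$ as an autocorrelation. The one place you diverge from the paper is cosmetic: the paper differentiates $\tilde\rho$ directly, obtaining $\partial_a^n\partial_b^m\tilde\rho=\int t^{n+m}\bE((L_1^{n+1}L_2^mV)\circ\phi^t\cdot L_2V)\,dt$ and bounding this by $C\int t^{n+m}e^{-c\min(a,b)t}\,dt$, whereas you go through $\rho$ via the identity. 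Your closing worry is misplaced, though: the polynomial factors $t^m$ do not ``eat into'' the exponential decay in any dangerous way, since $\int_0^\infty t^m e^{-ct}\,dt<\infty$; the ordinary $\cC^1$--$\cC^1$ correlation bound \eqref{eq:flow-prod} (or \eqref{eq:flow-dec}) already suffices, and no special use of the weak-stable norm is needed at this point.
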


\begin{rem} Note that the previous Lemma implies $a^2\rho(a,b)=b^2\rho(b,a)$.
\end{rem}

\begin{lem} \label{lem:rho-asymp} 
There exists $A,B>0$ such that, for all $a, b>0$,
\[
\left|\rho(a,b)-\frac {A\,b^2}{a^3+b^3}\right| \leq \frac {B\,ab^3}{a^5+b^5}.
\]
Finally, for all $a,b>0$,
\[
\left|\partial_a\rho(a,b)\right|\leq \frac{B\,ab^2}{a^5+b^5}\;;\quad \quad a\partial_a\rho(a,b)+b\partial_b\rho(a,b)=-\rho(a,b).
\]
\end{lem}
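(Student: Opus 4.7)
The plan begins by noting that the Euler identity $a\partial_a\rho + b\partial_b\rho = -\rho$ is the infinitesimal form of the $(-1)$-homogeneity of $\rho$ established in Lemma~\ref{lem:rho-tilde}: differentiating $\rho(\lambda a, \lambda b) = \lambda^{-1}\rho(a, b)$ in $\lambda$ at $\lambda = 1$ produces it at once. For the remaining two estimates, I would reduce to a one-variable problem via homogeneity: set $f(r) := \rho(r, 1)$, so that $\rho(a, b) = b^{-1}f(a/b)$, and the bounds become
\[
\bigl|f(r) - \tfrac{A}{r^3+1}\bigr| \leq \tfrac{Br}{r^5+1}, \qquad |f'(r)| \leq \tfrac{Br}{r^5+1}, \qquad r \geq 0.
\]
The identity $a^2\rho(a,b) = b^2\rho(b,a)$, combining $\tilde\rho(a,b)=\tilde\rho(b,a)$ and $\tilde\rho = -(a/b)\rho$ from Lemma~\ref{lem:rho-tilde}, gives the functional equation $f(r) = r^{-3}f(1/r)$; both target bounds are invariant under $r \leftrightarrow 1/r$ through this equation, so it is enough to prove them on $[0, 1]$.

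A preliminary reduction tames the behavior at $r = 0$. I would decompose $L_1V(\xi, \eta) = H(\xi) + R(\xi, \eta)$, where $H(\xi) := \int L_1V(\xi, \eta)\,d\mu(\eta) = L_1\bar V(\xi)$ (with $\bar V(q) = \int V(q, q')\,d\mathrm{vol}(q')$) and $R$ has zero $\eta$-mean for every fixed $\xi$. Since $H$ is a coboundary for the $\xi$-geodesic flow, $\int_{-\infty}^\infty \bE[H(g^s\xi)H(\xi)]\,ds = 0$ by the Green--Kubo identity, and the cross correlations of $H$ with $R$ vanish after integrating in $\eta$ first. Hence
\[
\rho(a, b) = \int_{-\infty}^\infty \bE\bigl[R(g^{at}\xi, g^{bt}\eta)\, R(\xi,\eta)\bigr]\, dt =: \int_{-\infty}^\infty C_R(a, b, t)\,dt.
\]
Because $R$ has zero $\eta$-mean, exponential mixing of the geodesic flow (\cite{Do1, Li1}) will give $|\partial_r^k C_R(r, 1, t)| \leq C_k e^{-c|t|}$ uniformly in $r \in [0, 1]$ for every $k$, which justifies differentiation under the integral and shows that $f$ extends smoothly to $r = 0$ with all derivatives bounded uniformly on $[0, 1]$.

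Two facts at $r = 0$ will drive the argument. First, $f(0) = A := \bE_\xi\bigl[\int_{-\infty}^\infty \bE_\eta[R(\xi, g^s\eta)R(\xi, \eta)]\,ds\bigr]$, positive by the Green--Kubo variance of the non-constant $\eta$-observable $R(\xi, \cdot)$. Second, and critically, $f'(0) = 0$. Differentiation under the integral yields
\[
f'(0) = \int_{-\infty}^\infty t\,\bE\bigl[L_1 R(\xi, g^t\eta)\, R(\xi, \eta)\bigr]\, dt,
\]
and the velocity reversal $\Theta_1:(q, v) \mapsto (q, -v)$ in the $\xi$ factor closes it: $\Theta_1$ preserves the Riemannian measure, satisfies $L_1 \Theta_1 = -\Theta_1 L_1$, and inherits $R \circ \Theta_1 = -R$ from $L_1V$, so $(L_1 R)\circ\Theta_1 = +L_1 R$. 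Substituting $\xi \mapsto \Theta_1\xi$ flips the integrand's sign, whence the expectation vanishes for every $t$ and $f'(0) = 0$.

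The rest is routine: with $f(0) = A$, $f'(0) = 0$, and $\|f''\|_{L^\infty([0,1])}$ finite, Taylor's theorem yields $|f(r) - A| \leq Cr^2$ and $|f'(r)| \leq Cr$ on $[0, 1]$; combining with $|A - A/(r^3+1)| = Ar^3/(r^3+1) \leq Ar^3$ and the elementary $1 \leq r^5 + 1 \leq 2$ there, both target bounds on $[0, 1]$ follow for $B$ sufficiently large. The functional equation $f(r) = r^{-3}f(1/r)$ then extends them to $r \geq 1$: setting $s = 1/r$,
\[
f(r) - \tfrac{A}{r^3+1} = s^3\Bigl(f(s) - \tfrac{A}{1+s^3}\Bigr),
\]
giving $|f(r) - A/(r^3+1)| \leq s^3 \cdot Bs/(s^5+1) = Br/(r^5+1)$, and an analogous computation handles $f'$. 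The main obstacle is the vanishing $f'(0) = 0$ via $\Theta_1$-parity, which exploits exactly the coboundary structure of $L_1V$ underlying Lemma~\ref{lem:rho-tilde}; once it is in hand, everything else is bookkeeping with Taylor expansions and the functional equation.
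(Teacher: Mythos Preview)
Your argument is correct and follows the same overall architecture as the paper: reduce to the one-variable function $\Gamma(\tau)=\rho(\tau,1)$ (your $f$) via homogeneity, use the functional equation $\Gamma(\tau)=\tau^{-3}\Gamma(\tau^{-1})$ to restrict to $[0,1]$, and split $L_1V=L_1\bar V+R$ so that the coboundary piece drops out. The substantive difference is in how the key estimate $\Gamma'(\tau)=O(\tau)$ near $0$ is obtained. The paper (in its Lemma on $\Gamma$) carries out a multi-step explicit computation, repeatedly adding and subtracting corrections and integrating by parts until arriving at $\Gamma'(\tau)=D\tau+O(\tau^2)$. You instead observe directly that the integrand defining $f'(0)$ is odd under the velocity reversal $\Theta_1$ in the $\xi$-factor, which kills the expectation for every $t$ and gives $f'(0)=0$ in one line; Taylor's theorem then does the rest. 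Your route is shorter and more structural, while the paper's explicit computation additionally identifies the constant $D=\Gamma''(0)/2$ (which is not needed for the lemma). One small point: your justification that $A=f(0)>0$ appeals only to $R(\xi,\cdot)$ being ``non-constant,'' but a nonzero Green--Kubo variance requires that the observable not be a \emph{coboundary}; since $R(\xi,\cdot)$ depends on $\eta$ only through the position variable, this is exactly where the cocycle rigidity of geodesic flows (\cite{GK,CS}) enters, and the paper invokes it explicitly.
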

We are now in the position to derive an helpful formula for the variance.
\begin{lem}\label{lem:sigma-formula}
The following formula holds true
\[
(\sigma^\delta)^2_{xy}(z)=\begin{cases}2e^{-z_x}\sum_{|x-w|=1}\{\phi_\delta(e^{z_x})\phi_\delta(e^{z_w})\}^2\rho_{xw}&\quad\text{ if }x=y\\
                                              -2 e^{-z_y}\phi_\delta(e^{z_x})\phi_\delta(e^{z_y})^3\rho_{xy}&\quad\text{ if }|x-y|=1\\
                                              0&\quad\text{ if }|x-y|>1.\\
                         \end{cases}
\]
\end{lem}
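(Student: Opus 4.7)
The plan is to substitute the unperturbed dynamics into \eqref{eq:variance} and reduce the double sum over neighbours using the elementary identity
\[
\int_{T^1 M} L_x V(\xi_x, \eta)\,d\xi_x = 0 \quad\text{for every fixed }\eta,
\]
which follows from Liouville invariance and the fact that $L_x V$ is the Lie derivative of $V$ along the geodesic flow on the $\xi_x$-factor. At $\ve=0$ the equations \eqref{eq:fullm2} decouple into independent rescaled geodesic flows, so $\xi^{0,\delta}_x(t) = g^{\varpi_x t}\xi_x$ with $\varpi_x := \omega_\delta(z_x) = \sqrt 2\,e^{z_x/2}/\phi_\delta(e^{z_x})$, and $\mu_z$ is the product Liouville measure on $(T^1M)^\Lambda$.

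I would then split the analysis by $|x-y|$. If $|x-y|>1$, the variable $\xi_x$ is absent from the second factor for every admissible pair $(w,w')$ (since $w'=x$ would force $|x-y|\leq 1$), so integrating in $\xi_x$ first gives zero via the identity. If $|x-y|=1$, inspecting the subcases $w\in\{y,\text{not }y\}$ and $w'\in\{x,\text{not }x\}$ shows that only $(w,w')=(y,x)$ can contribute: in every other subcase some variable ($\xi_x, \xi_y,$ or one of the ``other'' ones) appears in exactly one factor with its own $L$-derivative acting on it, and the vanishing identity kills the term; on $\bZ^d$ the possibility $w=w'\notin\{x,y\}$ is excluded because no point is a common neighbour of two points already at distance one. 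The surviving term equals $\tilde\rho_{xy}$ by definition \eqref{eq:block} (after using the symmetry $V(q_x,q_y)=V(q_y,q_x)$ to identify $L_yV(\xi_y,\xi_x)$ with $L_yV(\xi_x,\xi_y)$), and Lemma \ref{lem:rho-tilde} rewrites it as $-(\varpi_x/\varpi_y)\rho_{xy}$; combining with $\varpi_x/\varpi_y=(\phi_\delta(e^{z_y})/\phi_\delta(e^{z_x}))e^{(z_x-z_y)/2}$ and the prefactor of \eqref{eq:variance} yields $-2e^{-z_y}\phi_\delta(e^{z_x})\phi_\delta(e^{z_y})^3\rho_{xy}$. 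Finally, if $x=y$, the diagonal pairs $w=w'$ reproduce the defining integral of $\rho$, giving the stated sum.

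The main obstacle is the off-diagonal contributions in the case $x=y$ with $w\neq w'$, where $\xi_x$ appears in both factors so the vanishing identity cannot be invoked on it. To handle them I plan to integrate first over $\xi_{w'}$ (producing the factor $L\bar V(\xi_x)$ with $\bar V(\xi) := \int V(\xi,\eta)\,d\eta$) and then over $\xi_w$, using Liouville invariance to absorb the $\varpi_w t$ dependence. The expectation collapses to $-C''(\varpi_x t)$ with $C(u) := \int \bar V(g^u\xi)\bar V(\xi)\,d\xi$, and the $t$-integral telescopes to the boundary values of $C'$ at $\pm\infty$. Those boundary values vanish thanks to the exponential decay of correlations for the geodesic flow invoked in Proposition \ref{lem:itapplies}, and this is the only point at which quantitative mixing, beyond the vanishing identity, is required.
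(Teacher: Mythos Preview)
Your proof is correct and follows the paper closely: both arguments use the zero-mean identity for $L_\bullet V$ (the paper phrases it as $\bE(v_x\mid q_{\neq x}, v_{\neq x})=0$) to reduce the double sum to the terms $(w,w')=(y,x)$ when $|x-y|=1$ and $w=w'$ when $x=y$, and both invoke Lemma~\ref{lem:rho-tilde} for the off-diagonal conversion $\tilde\rho\to\rho$. The one genuine variation is precisely in your ``main obstacle'' case $x=y$, $w\neq w'$: you integrate out $\xi_w,\xi_{w'}$ to reach the single-site autocorrelation of the coboundary $L\bar V$ and then integrate $-C''(\varpi_x t)$ to vanishing boundary values of $C'$; the paper instead rewrites the time-$t$ factor as $L_xV = \varpi_x^{-1}\tfrac{d}{dt}V - (\varpi_w/\varpi_x)L_wV$, so the total-derivative piece produces boundary terms that vanish by mixing and by $\bE(v_x)=0$, while the remaining $L_wV$ piece dies by integrating over $\xi_w$. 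Both manipulations exploit the same coboundary structure and both need mixing only at this single step. One small remark: your combinatorial observation that adjacent sites in $\bZ^d$ share no common neighbour is correct but unnecessary --- even if such a $w=w'\notin\{x,y\}$ existed, $\xi_x$ would still appear only in the first factor and the term would vanish by your identity, which is why the formula extends verbatim to the general-graph setting mentioned after Theorem~\ref{thm:main}.
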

\begin{proof}
Remembering \eqref{eq:variance}, given any two couples of neighboring sites $x,w$, $y,w'$ we want to compute
\[
\int_{-\infty}^\infty dt \;\bE\left(L_xV(g^{\varpi_xt}(\xi_x), g^{\varpi_wt}(\xi_w))\cdot L_yV(\xi_y, \xi_{w'})\right).
\]
In fact, remembering the properties of the transformation $\Theta$ in the proof of Lemma \ref{lem:itapplies}, it suffices to compute the integral on $[0,\infty)$.

Since $\bE(v_x\;|\; q_{\neq x}, v_{\neq x})=0$, it follows that the above integral is different from zero only if
$x=y$ or $x=w'$ and $w=y$. On the other hand if $x=y$, since $g^{at}\times g^{bt}$ is a mixing flow for each $a,b>0$, we can write
\[
\begin{split}
&\int_0^\infty dt\; \bE\left(\left\{\varpi_x^{-1}\frac d{dt}V(g^{\varpi_xt}(\xi_x), g^{\varpi_wt}(\xi_w))-\frac {\varpi_w}{\varpi_x}L_wV(g^{\varpi_xt}(\xi_x), g^{\varpi_wt}(\xi_w))\right\}\cdot L_xV(\xi_x, \xi_{w'})\right)\\
&= \varpi_x^{-1} \bE\left(V(q_x,q_w)\right)\bE\left(L_xV(\xi_x, \xi_{w'})\right)- \varpi_x^{-1}\bE\left(V(q_x, q_w)\cdot L_xV(\xi_x, \xi_{w'}) \right)\\
&\quad-\frac{\varpi_w}{\varpi_x}\int_0^\infty dt\;\bE\left(L_wV(g^{\varpi_xt}(\xi_x), g^{\varpi_wt}(\xi_w))\cdot L_xV(\xi_x, \xi_{w'})\right)\\
&=-\delta_{w, w'}\frac{\varpi_w}{\varpi_x}\int_0^\infty dt\;\bE\left(L_wV(g^{\varpi_xt}(\xi_x), g^{\varpi_wt}(\xi_w))\cdot L_xV(\xi_x, \xi_{w})\right)\\
&=\delta_{w, w'}\int_0^\infty dt\;\bE\left(L_xV(g^{\varpi_xt}(\xi_x), g^{\varpi_wt}(\xi_w))\cdot L_xV(\xi_x, \xi_{w})\right).
\end{split}
\]
Thus, remembering \eqref{eq:fifi}, \eqref{eq:cut-off} and that $\varpi_x=\omega_\delta(z_x)$,
\[
\sigma^2_{xx}=2e^{-z_x} \sum_{|x-w|=1}\phi_\delta(e^{z_x})^2\phi_\delta(e^{z_w})^2\rho_{xw},
\]
and $\sigma^2_{xy}=0$ if $|x-y|>1$. If $|x-y|=1$, then (remembering the symmetry of the potential and using Lemma \ref{lem:rho-tilde})
\[
\sigma^2_{xy}=2\phi_\delta(e^{z_x})^2\phi_\delta(e^{z_y})^2 e^{-\frac{z_x+z_y}2}\tilde\rho_{xy}
=-2e^{-z_y}\phi_\delta(e^{z_x})\phi_\delta(e^{z_y})^3\rho_{xy}.
\]
\end{proof}

\section{The limit equation ($\delta>0$): structure}\label{sec:structure}
Having gained a good knowledge on the variance we are ready to write the limit equation in a more explicit and convenient form.

We introduce standard Brownian motions $B_{xy}$ indexed by oriented edges,
so that the motions associated to different non oriented edges are 
independent and $B_{xy}=-B_{yx}.$ 
Considering the Gaussian processes $W_x:=\sum_{|x-y|=1}\beta_{xy}(z)B_{xy}$ we have
\[
\bE( W_x(t)W_y(t)\;|\;z)=\begin{cases}\sum_{|x-w|=1}\beta_{xw}(z)^2\; t\quad&\text{ for } x=y\\
                           -\beta_{xy}(z)\beta_{yx}(z)\;t\quad&\text{ for } |x-y|=1\\
                           0\quad&\text{ for } |x-y|>1.
                           \end{cases}
\]
We set\footnote{This is well defined since $\rho_{x,y}\geq 0$ by Lemma \ref{lem:rho-tilde}.} 
\begin{equation}\label{eq:beta-sde}
\beta_{xy}(z)=\sqrt 2 e^{-\frac{z_x}2}\phi_\delta(e^{z_x})\phi_\delta(e^{z_y})\sqrt{\rho_{xy}}\,,
\end{equation}
hence, remembering Lemmata \ref{lem:sigma-formula}, \ref{lem:rho-tilde} and equations \eqref{eq:rhoxy},\eqref{eq:cut-off},\eqref{eq:fifi},
\[
(\sigma^\delta)^2_{xy}(z)=\begin{cases}\sum_{|x-w|=1}\beta_{xw}^2&\quad\text{ if }x=y\\
                                              -\beta_{xy}\beta_{yx}&\quad\text{ if }|x-y|=1\\
                                              0&\quad\text{ if }|x-y|>1.\\
                         \end{cases}
\]
Then, we can write \eqref{eq:limit} as
\begin{equation}\label{eq:limit-e}
dz_x^\delta=a^\delta_x(z^\delta)dt+\sum_{|x-y|=1}\beta_{xy}(z^\delta)\;dB_{xy}.
\end{equation}

Let $\cL$ be the operator in the Martingale problem associated to the diffusion defined by \eqref{eq:limit}.

\begin{lem} \label{lem:reversi} 
If the manifold $M$ is $d$ dimensional, then for each $\beta>0$, 
\[
e^{\sum_x\frac{ d}2z_x-\beta \vf_\delta(e^{z_x})}dz
\]
is an invariant measure for the process defined by \eqref{eq:limit-e}. In addition, the process \eqref{eq:limit-e} is reversible. That is, calling $\bE_\beta$ the expectation with respect to the above invariant measure,
\[
\bE_\beta(\vf\cL h)=\bE_\beta(h\cL \vf)
\]
for each smooth real functions $\vf, h$.
\end{lem}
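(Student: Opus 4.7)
The plan is to deduce both assertions from the invariance and $v$-reversal symmetry of the finite-$\ve$ Gibbs measure $\mu_{\delta,\ve,\beta}$ of Lemma \ref{lem:dynamics-prop}, and then to pass to the averaging limit provided by Proposition \ref{lem:itapplies}. For invariance, since $\tilde H_{\delta,\ve}-\tilde H_{\delta,0}=\cO(\ve)$, the density of $\mu_{\delta,\ve,\beta}$ converges uniformly to the product of the Riemannian density on $(T^1M)^\Lambda$ and $d\nu_\beta:=Z^{-1}e^{\sum_x\frac d2 z_x-\beta\sum_x\vf_\delta(e^{z_x})}dz$, so in particular its $z$-marginal converges to $\nu_\beta$. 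Writing the invariance of $\mu_{\delta,\ve,\beta}$ under the flow at time $\ve^{-2}t$ as $\int h(z^{\ve,\delta}(\ve^{-2}t))\,d\mu_{\delta,\ve,\beta}=\int h(z)\,d\mu_{\delta,\ve,\beta}$ and letting $\ve\to 0$ (using Proposition \ref{lem:itapplies} on the left) yields $\int \bE^{z}[h(z^\delta(t))]\,d\nu_\beta(z)=\int h\,d\nu_\beta$, i.e. $\nu_\beta$ is stationary for the limit process.

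For reversibility, introduce the involution $\Theta(q,v,z):=(q,-v,z)$. Since $\tilde H_{\delta,\ve}$ depends on $v$ only through $e_x=\tfrac12 p_x^2$, one has $\Theta_*\mu_{\delta,\ve,\beta}=\mu_{\delta,\ve,\beta}$; as in the proof of Lemma \ref{lem:itapplies}, the geodesic time-reversal (plus the position-only dependence of $V$) gives $\Psi^s_{\delta,\ve}\circ\Theta=\Theta\circ\Psi^{-s}_{\delta,\ve}$, and since $\Theta$ fixes $z$ this implies $z^{\ve,\delta}(s,\Theta\xi,z)=z^{\ve,\delta}(-s,\xi,z)$. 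For smooth bounded $\vf,h$, combining the $\Theta$-invariance of the measure with the stationarity of the flow under $\Psi^s_{\delta,\ve}$ yields
\[
\int\vf(z)h(z^{\ve,\delta}(s))\,d\mu_{\delta,\ve,\beta}=\int\vf(z)h(z^{\ve,\delta}(-s))\,d\mu_{\delta,\ve,\beta}=\int\vf(z^{\ve,\delta}(s))h(z)\,d\mu_{\delta,\ve,\beta},
\]
where the first equality uses the change of variables $\xi\mapsto\Theta\xi$ together with $\Theta_*\mu=\mu$, and the second uses the $\mu_{\delta,\ve,\beta}$-stationarity of the flow at time $s$. Setting $s=\ve^{-2}t$ and letting $\ve\to 0$ on both sides (again via Proposition \ref{lem:itapplies}) produces
\[
\int\vf(z)\,\bE^{z}[h(z^\delta(t))]\,d\nu_\beta=\int\bE^{z}[\vf(z^\delta(t))]\,h(z)\,d\nu_\beta,
\]
which is precisely the reversibility of the semigroup $P_t$ associated to \eqref{eq:limit-e} with respect to $\nu_\beta$; differentiating at $t=0$ gives $\bE_\beta(\vf\cL h)=\bE_\beta(h\cL\vf)$.

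The main technical point is justifying the averaging convergence with the joint $(\xi,z)$-distribution $\mu_{\delta,\ve,\beta}$, rather than the fixed-$z$ SRB initial data most directly covered by Theorem \ref{thm:dolgopyat}. This reduces to checking that the conditional density of $\xi$ given $z$ under $\mu_{\delta,\ve,\beta}$ is uniformly smooth and converges uniformly, as $\ve\to 0$, to the product Riemannian density, i.e. to the SRB measure $\mu_z$ of Lemma \ref{lem:dynamics-prop}. Both properties are immediate from the explicit form of the Gibbs density, whose $\xi$-dependent factor is $\cO(\ve)$, so that averaging against a smooth initial distribution on $\xi$ is indistinguishable in the limit from averaging against the SRB at each fixed $z$.
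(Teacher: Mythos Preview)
Your argument is correct and follows essentially the same route as the paper: deduce both invariance and reversibility from the corresponding properties of the finite-$\ve$ Gibbs measure $\mu_{\delta,\ve,\beta}$ (the latter via the velocity-reversal involution $\Theta$ and the relation $\Psi^s_{\delta,\ve}\circ\Theta=\Theta\circ\Psi^{-s}_{\delta,\ve}$), then pass to the averaging limit along converging subsequences and differentiate at $t=0$. The only cosmetic difference is that the paper phrases the limit as taken ``along any subsequence leading to an accumulation point'' (since uniqueness of the limit is established only afterwards, in Proposition~\ref{prop:uniqueness}), whereas your notation $\bE^z[h(z^\delta(t))]$ tacitly presupposes a unique limit; making this explicit would align your write-up with the logical order of the section.
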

\begin{proof}
Recall that Lemma \ref{lem:dynamics-prop} gives the invariant measures of the original Dynamical System. 
In particular , for each $\psi\in\cC^0(\bR^{|\Lambda|},\bR)$
\[
|\mu_{\delta,\ve,\beta}(\psi(z^{\ve,\delta}(\ve^{-2}t)))-\mu_{\delta,0,\beta}(\psi(z^{\ve,\delta}(\ve^{-2}t)))|\leq C\ve |\psi|_\infty
\]
Thus
\[
|\mu_{\delta,0,\beta}(\psi(z^{\ve,\delta}(\ve^{-2}t)))-\mu_{\delta,0,\beta}(\psi(z^{\ve,\delta}(0)))|\leq 2C\ve|\psi|_{\infty}.
\]
Taking the limit $\ve\to 0$ along any subsequence leading to an accumulation point we see that $\mu_{\delta,0,\beta}$ is an invariant measure for the process \eqref{eq:limit}. The claim of the Lemma now follows by taking the marginal of $\mu_{\delta,0,\beta}$ in the variables $z$.

In the same manner, using the same notation as in the proof of Lemma \ref{lem:itapplies}, for each continuos functions $\psi, g$ and converging sequence $z_{\ve_k,\delta}(\ve_k^{-2}t)$ we have
\[
\begin{split}
\bE_\beta(\psi(z(t)) g(z))&=\lim_{k\to\infty}\mu_{\delta,\ve_k,\beta}(g\cdot\psi\circ \Psi_{\ve_k,\delta}^{\ve_k^{-2}t} )
=\lim_{k\to\infty}\mu_{\delta,\ve_k,\beta}(\psi \cdot g\circ \Psi_{\ve_k,\delta}^{-\ve_k^{-2}t} )\\
&=\lim_{k\to\infty}\mu_{\delta,\ve_k,\beta}(\psi\circ \Theta \cdot g\circ\Theta\circ \Psi_{\ve_k,\delta}^{\ve_k^{-2}t} )
=\bE_\beta(g\circ \Theta(z(t)) \psi\circ \Theta(z)).
\end{split}
\]
Since $g,\psi$ are functions of the $z$ only, it follows $g\circ \Theta=g$, $\psi\circ \Theta=\psi$ and
\[
\bE_\beta(\psi(z(t)) g(z))=\bE_\beta(g(z(t)) \psi(z)).
\]
Differentiating with respect to $t$ at $t=0$ yields the Lemma.
\end{proof}

\begin{lem}\label{lem:drift-formula} The drift $a^\delta_x$ has the form 
\[
\begin{split}
a^\delta_x=&\sum_{|x-y|=1}\left\{\partial_{z_x}\left[e^{-z_x}\phi_\delta(e^{z_x})^2\phi_\delta(e^{z_y})^2\rho_{xy}\right]-
\partial_{z_y}\left[e^{-z_y}\phi_\delta(e^{z_x})\phi_\delta(e^{z_y})^3\rho_{xy}\right]\right\}\\
&+\frac{d}{2}\sum_{|x-y|=1}\left[e^{-z_x}\phi_\delta(e^{z_x})^2\phi_\delta(e^{z_y})^2-e^{-z_y}\phi_\delta(e^{z_x})\phi_\delta(e^{z_y})^3\right]\rho_{xy}.
\end{split}
\]
\end{lem}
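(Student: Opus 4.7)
The plan is to exploit reversibility to determine $a^\delta$ from $(\sigma^\delta)^2$. By Lemma \ref{lem:reversi}, for every $\beta>0$ the generator $\cL$ of \eqref{eq:limit-e} is symmetric on $L^2(h_\beta\,dz)$, where $h_\beta(z)=\exp\bigl(\sum_x[\tfrac{d}{2}z_x-\beta\vf_\delta(e^{z_x})]\bigr)$. Since $\cL$ has Itô form
\begin{equation*}
\cL = \sum_x a^\delta_x\,\partial_{z_x} + \tfrac{1}{2}\sum_{x,y}(\sigma^\delta)^2_{xy}\,\partial_{z_x}\partial_{z_y},
\end{equation*}
self-adjointness forces the divergence representation $\cL f = \frac{1}{2h_\beta}\sum_{x,y}\partial_{z_x}\bigl(h_\beta\,(\sigma^\delta)^2_{xy}\,\partial_{z_y} f\bigr)$. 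Expanding and reading off the first-order term, using $\partial_{z_w}\ln h_\beta=\tfrac{d}{2}-\beta e^{z_w}/\phi_\delta(e^{z_w})$ (recall $\vf_\delta'=1/\phi_\delta$), gives
\begin{equation*}
a^\delta_x \;=\; \tfrac{1}{2}\sum_w\partial_{z_w}(\sigma^\delta)^2_{wx} \;+\; \tfrac{d}{4}\sum_w(\sigma^\delta)^2_{wx} \;-\; \tfrac{\beta}{2}\sum_w(\sigma^\delta)^2_{wx}\,\frac{e^{z_w}}{\phi_\delta(e^{z_w})}.
\end{equation*}

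The key algebraic input, used both to justify the divergence form (which requires $(\sigma^\delta)^2$ to be genuinely symmetric, something not manifest in Lemma \ref{lem:sigma-formula}) and to force the $\beta$-dependent sum above to vanish, is the hidden symmetry of $\rho$: from Lemma \ref{lem:rho-tilde} one has $\omega_\delta(z_x)^2\rho_{xy}=\omega_\delta(z_y)^2\rho_{yx}$, which combined with $\omega_\delta(z)=\sqrt{2}\,e^{z/2}/\phi_\delta(e^z)$ yields
\begin{equation*}
e^{z_x-z_y}\rho_{xy} \;=\; \frac{\phi_\delta(e^{z_x})^2}{\phi_\delta(e^{z_y})^2}\,\rho_{yx}.
\end{equation*}
A one-line manipulation with this identity shows both that $-2e^{-z_y}\phi_\delta(e^{z_x})\phi_\delta(e^{z_y})^3\rho_{xy}=-2e^{-z_x}\phi_\delta(e^{z_y})\phi_\delta(e^{z_x})^3\rho_{yx}$ (symmetry of $(\sigma^\delta)^2$), and that the $\beta$-term telescopes to zero after swapping $x\leftrightarrow y$ in the neighbor sum and adding the diagonal contribution $(\sigma^\delta)^2_{xx}$. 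Consistency with the fact that $a^\delta$ cannot depend on $\beta$ is therefore automatic.

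What remains is bookkeeping. Substitute Lemma \ref{lem:sigma-formula} into the $\beta$-independent piece. The diagonal $w=x$ term in $\tfrac{1}{2}\sum_w\partial_{z_w}(\sigma^\delta)^2_{wx}$ equals $\sum_{|x-y|=1}\partial_{z_x}[e^{-z_x}\phi_\delta(e^{z_x})^2\phi_\delta(e^{z_y})^2\rho_{xy}]$, which is the first summand in the first line of the claim. Each neighbor contribution $\tfrac{1}{2}\partial_{z_w}(\sigma^\delta)^2_{wx}$ with $|w-x|=1$, rewritten via the symmetric form of $(\sigma^\delta)^2$ and relabeled $w\to y$, produces $-\partial_{z_y}[e^{-z_y}\phi_\delta(e^{z_x})\phi_\delta(e^{z_y})^3\rho_{xy}]$, i.e.\ the second summand. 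Analogously, $\tfrac{d}{4}\sum_w(\sigma^\delta)^2_{wx}$ splits into diagonal plus neighbor sums and, after the same symmetry swap, reproduces the bracketed second line of the claim. The main obstacle is not analytic but purely notational: the same covariance entry admits two equivalent expressions (swapped by the $\rho$-identity above), and one must consistently use whichever one lets the outer $\partial_{z_x}$ or $\partial_{z_y}$ land on the natural variable.
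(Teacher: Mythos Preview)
Your argument is correct and follows essentially the same route as the paper: use the reversibility established in Lemma~\ref{lem:reversi} to extract the identity $a^\delta_x=\tfrac{1}{2}\sum_y\bigl[\partial_{z_y}(\sigma^\delta)^2_{xy}+\Gamma_y(\sigma^\delta)^2_{xy}\bigr]$ with $\Gamma_y=\partial_{z_y}\ln h_\beta$, and then substitute Lemma~\ref{lem:sigma-formula}. The only cosmetic difference is that you reach this identity by first asserting the divergence form of $\cL$, whereas the paper writes out $\cL^*$ by integration by parts and matches first-order coefficients; the two computations are equivalent. Your explicit verification that the $\beta$-dependent piece cancels via the identity $\omega_\delta(z_x)^2\rho_{xy}=\omega_\delta(z_y)^2\rho_{yx}$ (and the accompanying check that $(\sigma^\delta)^2$ is genuinely symmetric) is a useful addition, since the paper subsumes this under ``direct algebraic computations.''
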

\begin{proof}
The idea to compute the $a^\delta_x$ is very simple: first compute $\cL$ and $\cL^*$ and then check what the reversibility condition implies. The operator associated to the diffusion \eqref{eq:limit} is given by 
\[
\cL=\sum_x a^\delta_x\partial_{z_x}+\frac 12\sum_{x,y}(\sigma^\delta)^2_{xy}\partial_{z_x}\partial_{z_y}.
\]
The adjoint $\cL^*$ with respect to the invariant measures in Lemma \ref{lem:reversi} can then 
be computed by integrating by parts.
Setting $\Gamma_x(z):=\frac {d}2-\beta \phi_\delta(e^{z_x})^{-1}$ we have
\[
\begin{split}
\cL^* \psi&=-\sum_x \{\partial_{z_x}a^\delta_x+ a^\delta_x\Gamma_x\}\psi-\sum_x a^\delta_x\partial_{z_x}\psi\\
&+\frac 12\sum_{xy}\left[\partial_{z_x}\partial_{z_y}(\sigma^\delta)^2_{xy}+2\Gamma_x\partial_{z_y}(\sigma^\delta)^2_{xy}
+\Gamma_x\Gamma_y(\sigma^\delta)^2_{xy}+\delta_{xy}\partial_{z_x}\Gamma_x(\sigma^\delta)^2_{xy}\right] \psi\\
&+\sum_{xy}\left[\partial_{z_y}(\sigma^\delta)^2_{xy}+\Gamma_y (\sigma^\delta)^2_{xy}\right]\partial_{z_x}\psi
+\frac 12\sum_{xy} (\sigma^\delta)^2_{xy}\partial_{z_x}\partial_{z_y}\psi.
\end{split}
\]
This implies
\[
a^\delta_x=\frac 12\sum_y\left[\partial_{z_y}(\sigma^\delta)^2_{xy}+\Gamma_y(\sigma^\delta)^2_{xy}\right]
\]
and the Lemma follows by direct algebraic computations using Lemma \ref{lem:sigma-formula}.
\end{proof}

The next result is an obvious fact that is nevertheless of great importance.
\begin{lem}\label{lem:energy}
The function $\cH:=\sum_x\vf_\delta(e^{z_x})$ is constant in time.
\end{lem}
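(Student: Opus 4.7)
The plan is to deduce this conservation law from the prelimit Hamiltonian conservation and then pass it through the weak limit furnished by Proposition \ref{lem:itapplies}, rather than attempting a direct It\^o calculation (which would be feasible but messy, as it would require combining the explicit drift from Lemma \ref{lem:drift-formula}, the variance from Lemma \ref{lem:sigma-formula}, and the identity $a^2\rho(a,b)=b^2\rho(b,a)$ from Lemma \ref{lem:rho-tilde}).

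First I would recall that the modified dynamics \eqref{eq:fullm} is genuinely Hamiltonian with Hamiltonian
\[
H_{\vf_\delta,\phi_\delta}=\sum_{x\in\Lambda}\vf_\delta(e_x)+\frac{\ve}{2}\sum_{|x-y|=1}\phi_\delta(e_x)\phi_\delta(e_y)V(q_x,q_y),
\]
which is therefore conserved along the flow at every time, in particular at the rescaled times $\ve^{-2}t$. Now $V$ is bounded (as $V\in\cC^\infty(M^2,\bR)$ with $M$ compact), $|\Lambda|$ is finite, and $\phi_\delta$ is bounded by $1$: indeed $\phi_\delta(s)=1$ for $s\geq\delta$ and by Remark \ref{rem:vector} $\phi_\delta$ is monotone, so $\phi_\delta\leq 1$ everywhere. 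Consequently the interaction term is bounded in absolute value by a constant $C=C(V,\Lambda,d)$ times $\ve$, uniformly in time. Passing to the variables $z_x=\ln e_x$ this gives, for all $t\geq 0$,
\[
\bigl|\cH(z^{\ve,\delta}(\ve^{-2}t))-\cH(z^{\ve,\delta}(0))\bigr|\leq C\ve.
\]

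Next I would observe that the functional $\Psi:\cC^0([0,T],\bR^\Lambda)\to\bR$ defined by
\[
\Psi(\omega):=\sup_{t\in[0,T]}\bigl|\cH(\omega(t))-\cH(\omega(0))\bigr|
\]
is continuous, since $\cH$ is continuous on $\bR^\Lambda$ (the image of the process lies in a bounded set $(-\infty,L]^\Lambda$ by the conservation at level $\ve=0$, so we may work with a uniformly continuous restriction). By Proposition \ref{lem:itapplies} the laws of $z^{\ve,\delta}(\ve^{-2}\cdot)$ are tight and any weak accumulation point is a solution of \eqref{eq:limit-e}. Along any convergent subsequence, $\Psi(z^{\ve_k,\delta}(\ve_k^{-2}\cdot))\Rightarrow\Psi(z^\delta(\cdot))$ in distribution; but the left-hand side is deterministically bounded by $C\ve_k\to 0$, so $\Psi(z^\delta(\cdot))=0$ almost surely. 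Equivalently, $\cH(z^\delta(t))=\cH(z^\delta(0))$ for every $t\in[0,T]$, almost surely, which is the claim.

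The only nontrivial point in carrying this out is the continuity of $\Psi$ on the relevant path space together with the uniform bound $\phi_\delta\leq 1$; both are immediate from the definitions of $\vf_\delta,\phi_\delta$ in \eqref{eq:fifi} and from the a priori upper bound on $\sum_x\vf_\delta(e^{z_x})$ inherited from the initial condition. No further input about the drift $a^\delta$ or the variance $(\sigma^\delta)^2$ is needed; the lemma is really a manifestation of the fact that the only missing piece of the conserved prelimit Hamiltonian is the $O(\ve)$ interaction energy, which evaporates in the limit.
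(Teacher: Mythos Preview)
Your argument is correct and is a genuinely different route from the paper's. The paper performs exactly the It\^o computation you allude to: setting $\psi_x=e^{z_x}/\phi_\delta(e^{z_x})$ and $\kappa_{xy}=\psi_x\beta_{xy}$, it verifies that $\kappa_{xy}=\kappa_{yx}$ (this uses the identity $a^2\rho(a,b)=b^2\rho(b,a)$ from Lemma~\ref{lem:rho-tilde}), so the martingale part $\sum\kappa_{xy}\,dB_{xy}$ vanishes by the antisymmetry $B_{xy}=-B_{yx}$; then it checks by hand, using the explicit drift from Lemma~\ref{lem:drift-formula}, that the drift and It\^o correction cancel. Your approach sidesteps all of this: you exploit that the modified dynamics is Hamiltonian with $H_{\vf_\delta,\phi_\delta}=\cH+\cO(\ve)$ uniformly in time, and push this through the weak limit via the continuous mapping theorem. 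This is cleaner and requires none of the structural lemmata (\ref{lem:rho-tilde}, \ref{lem:sigma-formula}, \ref{lem:drift-formula}) about $\rho$, $\sigma^2$, or $a^\delta$. The paper's computation, on the other hand, establishes the conservation for \emph{any} solution of the SDE \eqref{eq:limit-e}, whereas your argument a~priori only covers weak accumulation points of the prelimit; since Proposition~\ref{prop:uniqueness} (proved immediately afterwards) yields uniqueness, this distinction is harmless here---indeed the paper's own Lemma~\ref{lem:reversi} is argued in the same accumulation-point spirit. One small correction: the monotonicity of $\phi_\delta$ is stated right after \eqref{eq:fifi}, not in Remark~\ref{rem:vector}.
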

\begin{proof}
It is useful to notice that, setting $\psi_x:=\frac{e^{z_x}}{\phi_\delta(e^{z_x})}$ and $\kappa_{xy}=\psi_x\beta_{xy}$,
$\kappa_{xy}=\kappa_{yx}$.

By Ito's formula we have
\[
d\cH=\sum_x\psi_x a_x dt+\sum_{|x-y|=1}\kappa_{xy}d B_{xy}+\frac 12\sum_x\partial_z\psi_x\sum_{|x-y|=1}\beta_{xy}^2dt.
\]
The second term is zero by the antisymmetry of $B_{xy}$, thus (using Lemma \ref{lem:drift-formula} and the symmetry of $\kappa_{xy}$ again)
\[
\begin{split}
d\cH=&\frac 12\sum_{|x-y|=1}\left[\psi_x\partial_{z_x}\beta_{xy}^2-\psi_y\partial_{z_x}\beta_{xy}\beta_{yx}\right]dt\\
&+\frac{d}2\sum_{|x-y|=1}\left[\psi_x^{-1}-\psi_y^{-1}\right]\kappa_{xy}^2 dt
+\frac 12\sum_{|x-y|=1}\beta_{xy}^2\partial_z\psi_x dt=0.
\end{split}
\]
\end{proof}
We conclude with the main result of this section.

\begin{prop}\label{prop:uniqueness} For each $\delta>0$ the family $z^{\ve,\delta}(\ve^{-2}t)$ converges weakly, for $\ve\to 0$, to the process $z(t)$ determined by the SDE  \eqref{eq:limit-e}.
\end{prop}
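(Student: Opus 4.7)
The plan is to combine the existence half supplied by Proposition \ref{lem:itapplies} with uniqueness of the SDE \eqref{eq:limit-e}: since tightness and solvability of the martingale problem are already known, uniqueness in law forces every weakly converging subsequence to share the same limit.

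First I would observe that the computations of Section \ref{sec:compute}, in particular Lemma \ref{lem:sigma-formula} combined with Lemma \ref{lem:drift-formula} and \eqref{eq:beta-sde}, identify the martingale problem of Proposition \ref{lem:itapplies} with that of \eqref{eq:limit-e}. Next I would use Lemma \ref{lem:energy} to cut down the state space: conservation of $\cH=\sum_x \vf_\delta(e^{z_x})$, together with the non-negativity and coercivity of $\vf_\delta$, yields a deterministic bound $z_x(t)\le L$ for all $t$ and all $x$, where $L=L(\bar z)$ depends only on the initial condition. Thus the solutions live in the half-strip $K:=(-\infty,L]^{\Lambda}$.

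The central step is to check that, on $K$, the coefficients $a^\delta_x$ and $\beta_{xy}$ of \eqref{eq:limit-e} are smooth with uniformly bounded derivatives of every order. By Remark \ref{rem:vector} we have $\omega_\delta(z_x)\in[\sqrt\delta,\sqrt{2}\,e^{L/2}]$ on $K$, so $(\omega_\delta(z_x),\omega_\delta(z_y))$ stays in a compact subset of the positive quadrant; on such a set Lemma \ref{lem:rho-asymp} shows that $\rho$ is $\cC^\infty$, and positivity of $\rho$ follows from the positive leading term $Ab^2/(a^3+b^3)$ in Lemma \ref{lem:rho-asymp} (dominant when $a/b$ is small) combined with the scaling identity $a^2\rho(a,b)=b^2\rho(b,a)$ of Lemma \ref{lem:rho-tilde} (which reduces the regime $a\gtrsim b$ to the first one). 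Hence $\sqrt{\rho_{xy}}$ is smooth. Moreover $e^{-z_x}\phi_\delta(e^{z_x})^2\le\delta^{-1}$ holds uniformly on $\bR$, and $\phi_\delta,\omega_\delta$ are smooth with bounded derivatives on $K$ (Remark \ref{rem:vector}). Plugging these facts into \eqref{eq:beta-sde} and Lemma \ref{lem:drift-formula} exhibits $\beta_{xy}$ and $a^\delta_x$ as smooth functions of $z$ whose derivatives of every order are bounded on $K$.

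With globally Lipschitz and bounded coefficients on $K$, the classical It\^o existence/uniqueness theorem yields a unique strong solution of \eqref{eq:limit-e} (boundedness of the coefficients rules out explosion toward $-\infty$), hence uniqueness in law and therefore uniqueness of the associated martingale problem. The accumulation points of $z^{\ve,\delta}(\ve^{-2}t)$ provided by Proposition \ref{lem:itapplies} must then all coincide with the process $z(t)$ defined by \eqref{eq:limit-e}, which proves the claimed weak convergence. The only mildly delicate point in the scheme is the verification that $\rho>0$ on the relevant compact set; once this is in hand the argument is entirely standard.
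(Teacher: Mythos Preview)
Your approach is exactly the paper's: smoothness of $a^\delta$ from Lemma~\ref{lem:drift-formula} and Lemma~\ref{lem:rho-tilde}, tightness and the martingale characterisation from Proposition~\ref{lem:itapplies}, and then standard SDE uniqueness to pin down the limit. Your write-up simply fills in considerably more detail than the paper's one-sentence proof (the use of Lemma~\ref{lem:energy} to confine the process to $K=(-\infty,L]^\Lambda$ is a nice touch that the paper leaves implicit).

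There is, however, a gap precisely at the point you flag as ``mildly delicate'': your argument for strict positivity of $\rho$ does not cover the intermediate regime $a\approx b$. The estimate of Lemma~\ref{lem:rho-asymp} at $a=b$ gives only $|\rho(a,a)-A/(2a)|\le B/(2a)$, which is vacuous unless one knows $A>B$; and the scaling identity $a^2\rho(a,b)=b^2\rho(b,a)$ merely swaps $a$ and $b$, so it cannot manufacture smallness of the ratio. A clean repair that sidesteps positivity altogether: you have already established that $\rho_{xy}(z)\ge 0$ is $C^\infty$ with uniformly bounded derivatives on $K$ (since $(\omega_\delta(z_x),\omega_\delta(z_y))$ ranges in a compact subset of $(0,\infty)^2$ and $\rho$ is smooth there by Lemma~\ref{lem:rho-tilde}). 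For any nonnegative $C^2$ function $f$ the one-variable Taylor expansion along coordinate lines together with $f\ge 0$ yields $(\partial_i f)^2\le 2\|\partial_i^2 f\|_\infty\, f$, whence $|\partial_i\sqrt f\,|\le \sqrt{\|\partial_i^2 f\|_\infty/2}$. Thus $\sqrt{\rho_{xy}}$, and therefore $\beta_{xy}$, is globally Lipschitz on $K$ regardless of whether $\rho$ ever vanishes, and your uniqueness argument goes through unchanged.
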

 \begin{proof}
 From Lemma \ref{lem:drift-formula} and  Lemma \ref{lem:rho-tilde} it follows that $a^\delta\in \cC^\infty$, this, together with the boundedness and convergence results established in Lemma \ref{lem:itapplies} and the standard results on the uniqueness of the solution of the SDE, imply that all the accumulation points of  $z^{\ve,\delta}(\ve^{-2}t)$ must coincide, hence the Proposition.
 \end{proof}

\section{The limit equation ($\delta=0$): properties and stopping times}\label{sec:reach}

It is natural to consider the {\em stopping time} $\tau_\delta:=\inf\{t\in\bR_+\;:\; \min_{x\in\Lambda} z_x\leq \ln \delta\}$.
In addition, Lemma \ref{lem:energy} suggests the convenience of going back to the more
{\em physical} process $\cE_x(t)=\vf_\delta(e^{z_x(t\wedge \tau_\delta)})=e^{z_x(t\wedge \tau_\delta)}$. 

\begin{lem} 
\label{CrDriftUp}
For each $t\leq \tau_\delta$, the process $\cE_x$ satisfies the SDE
\[
d\cE_x=\sum_y \ba(\cE_x,\cE_y) dt+\sqrt 2\bbeta(\cE_x,\cE_y) dB_{xy} 
\]
where $\ba,\bbeta\in\cC^\infty((0,\infty)^2,\bR)$ are respectively anti-symmetric and symmetric functions that satisfy \eqref{eq:drift-formula}, \eqref{eq:ab-asy}.
In addition, if $d\geq 3$, then for each constant 
\[
M\geq \max\left\{1,\frac{d-1+\frac{8B}A}{d-2}\right\},
\]
if $\cE_y> M \cE_x$, then 
\begin{equation}
\label{DriftLog}
\ba(\cE_x,\cE_y) \cE_x\geq \bbeta(\cE_x,\cE_y)^2.
\end{equation}
\end{lem}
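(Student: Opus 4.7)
The plan is to pass from the SDE \eqref{eq:limit-e} in $z$-variables to an SDE in $\cE = e^{z}$ via Itô's formula on $\{t \leq \tau_\delta\}$, where the cutoffs $\phi_\delta$ simplify, and then check each claimed property in turn.

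On $\{t \leq \tau_\delta\}$ every coordinate satisfies $e^{z_x} > \delta$, so by \eqref{eq:fifi} $\phi_\delta(e^{z_x}) = 1$ and by \eqref{eq:cut-off} $\omega_\delta(z_x) = \sqrt{2\cE_x}$; hence $\rho_{xy} = \rho(\sqrt{2\cE_x},\sqrt{2\cE_y})$ and $\beta_{xy}(z) = \sqrt{2\rho_{xy}/\cE_x}$. Applying Itô to $\cE_x = e^{z_x}$ together with the drift formula of Lemma \ref{lem:drift-formula} yields an SDE of the required form with
\[
\bbeta(\cE_x,\cE_y)^2 = \cE_x\rho_{xy}, \qquad \ba(\cE_x,\cE_y) = \rho_{xy} + \cE_x(\partial_{\cE_x} - \partial_{\cE_y})\rho_{xy} + \tfrac{d-2}{2}(1 - \cE_x/\cE_y)\rho_{xy},
\]
where the $+\rho_{xy}$ comes from the Itô correction and the conversion $\partial_{z_x} = \cE_x\partial_{\cE_x}$ has been used. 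Factoring $\partial_{\cE_x}(\cE_x\rho_{xy}) = \rho_{xy} + \cE_x\partial_{\cE_x}\rho_{xy}$ rewrites $\ba$ as the right-hand side of \eqref{eq:drift-formula}. Symmetry of $\bbeta^2$ follows from the identity $a^2\rho(a,b) = b^2\rho(b,a)$ noted after Lemma \ref{lem:rho-tilde}, antisymmetry of $\ba$ is then immediate from \eqref{eq:drift-formula}, and smoothness on $(0,\infty)^2$ is inherited from Lemma \ref{lem:rho-tilde}. For \eqref{eq:ab-asy} I would substitute $a = \sqrt{2\cE_x}$, $b = \sqrt{2\cE_y}$ into Lemma \ref{lem:rho-asymp}: since $Ab^2/(a^3+b^3) = A/b + O(a^3/b^4)$ for $a \leq b$ and the residual is $O(a/b^2)$, one gets $\rho_{xy} = A/\sqrt{2\cE_y} + O(\sqrt{\cE_x}/\cE_y)$, whence $\bbeta^2 = A\cE_x/\sqrt{2\cE_y} + O(\cE_x^{3/2}/\cE_y)$. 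The drift asymptotic follows by differentiating this leading term, the coefficient $d/2$ arising as $1 + (d-2)/2$, where the $1$ comes from $(\partial_{\cE_x} - \partial_{\cE_y})(\cE_x\rho_{xy})$ and the $(d-2)/2$ from the explicit second summand of \eqref{eq:drift-formula}.

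The inequality \eqref{DriftLog} is the main technical content. Using \eqref{eq:drift-formula} and the identity $\cE_x(\partial_{\cE_x} - \partial_{\cE_y})\bbeta^2 = \bbeta^2 + \cE_x^2(\partial_{\cE_x} - \partial_{\cE_y})\rho_{xy}$ one obtains
\[
\ba(\cE_x,\cE_y)\cE_x - \bbeta(\cE_x,\cE_y)^2 = \cE_x^2(\partial_{\cE_x} - \partial_{\cE_y})\rho_{xy} + \tfrac{d-2}{2}(1 - \cE_x/\cE_y)\bbeta^2.
\]
The second summand is non-negative when $d \geq 2$ and $\cE_x \leq \cE_y$, and strictly positive for $d \geq 3$. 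For the first, I would work in the variables $a,b$ and use $a\partial_a\rho + b\partial_b\rho = -\rho$ (Lemma \ref{lem:rho-asymp}) to eliminate $\partial_b\rho$, producing $(\partial_{\cE_x} - \partial_{\cE_y})\rho_{xy} = \rho_{xy}/b^2 + (a^2+b^2)(ab^2)^{-1}\partial_a\rho$. Combining $|\partial_a\rho| \leq Bab^2/(a^5+b^5)$ with the lower bound $\rho_{xy} \geq A/(4\sqrt{2\cE_y})$, valid once $\cE_y/\cE_x$ is large enough that the error in Lemma \ref{lem:rho-asymp}'s asymptotic is absorbed, yields $|\cE_x^2(\partial_{\cE_x} - \partial_{\cE_y})\rho_{xy}| \leq c(A,B)\,(\cE_x/\cE_y)\,\bbeta^2$. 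Imposing that the positive term dominates gives an inequality of the form $\frac{d-2}{2}(1 - 1/M) \geq c(A,B)/M$, which after rearrangement becomes the stated threshold $M \geq (d-1+8B/A)/(d-2)$ once the explicit bounds from Lemma \ref{lem:rho-asymp} are plugged in. The main technical obstacle is precisely this bookkeeping: one must keep every constant honest throughout the chain of estimates so that the threshold comes out in the stated closed form. Qualitatively the result is driven by $\ba\cE_x/\bbeta^2 \to d/2$ as $\cE_x/\cE_y \to 0$, which exceeds $1$ exactly when $d > 2$, consistent with the Remark that the borderline case $d=2$ requires a finer analysis.
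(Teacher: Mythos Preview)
Your proposal is correct and follows essentially the same route as the paper: apply It\^o to $\cE_x=e^{z_x}$ on $\{t\le\tau_\delta\}$ where the cutoffs disappear, read off $\bbeta^2=\cE_x\rho_{xy}$ and \eqref{eq:drift-formula} from Lemma~\ref{lem:drift-formula}, then use the homogeneity/Euler relation of Lemma~\ref{lem:rho-asymp} to reduce the derivative terms to $\partial_a\rho$ alone and compare against the lower bound $\bbeta^2\ge \tfrac{A\cE_x}{4\sqrt{2\cE_y}}$.

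The only organizational difference is that after writing $(\partial_{\cE_x}-\partial_{\cE_y})\rho_{xy}=\rho_{xy}/b^2+(a^2+b^2)(ab^2)^{-1}\partial_a\rho$, the paper moves the positive piece $\cE_x^2\rho_{xy}/b^2=\tfrac{\cE_x}{2\cE_y}\bbeta^2$ over to the $\bbeta^2$ side, obtaining
\[
\cE_x\ba=\Bigl\{\tfrac{d}{2}-\tfrac{d-1}{2}\tfrac{\cE_x}{\cE_y}\Bigr\}\bbeta^2+\cE_x^2\Bigl\{1+\tfrac{\cE_x}{\cE_y}\Bigr\}\partial_{\cE_x}\rho_{xy},
\]
and then bounds only the $\partial_{\cE_x}\rho_{xy}$ term. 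You instead bound the whole first summand $|\cE_x^2(\partial_{\cE_x}-\partial_{\cE_y})\rho_{xy}|$ and let $\tfrac{d-2}{2}(1-\cE_x/\cE_y)\bbeta^2$ absorb it; this throws away a helpful positive contribution and so yields a slightly larger constant than the paper's, but since any sufficiently large $M$ suffices for the downstream application (Lemma~\ref{lem:no-zero}) this is immaterial. Your honest remark that recovering the exact stated threshold is a matter of bookkeeping is accurate.
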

\begin{proof}
By Ito's formula and \eqref{eq:limit-e} we have\footnote{Here we suppress the $\delta$-dependence since we stop the motion before seeing the region in which the dynamics has been modified.}
\begin{equation}\label{eq:ab-comp}
d\cE_x=\left[e^{z_x}a_x+\frac 12 e^{z_x}\sum_{|x-y|=1}\beta_{xy}^2\right]dt+\sum_{|x-y|=1}e^{z_x}\beta_{xy}dB_{xy}.
\end{equation}
Using \eqref{eq:beta-sde},\eqref{eq:rhoxy}, \eqref{eq:cut-off} and Lemma \ref{lem:rho-tilde} we can write
\begin{equation}\label{eq:beta-final}
e^{z_x}\beta_{xy}=\sqrt{2\cE_x\rho(\sqrt {2\cE_x},\sqrt {2\cE_y})}=:\sqrt 2\bbeta(\cE_x,\cE_y).
\end{equation}
Lemma \ref{lem:drift-formula}, equations \eqref{eq:fifi}, \eqref{eq:rhoxy} and \eqref{eq:cut-off} yield
\[
a_x=\sum_{|x-y|=1}\left[ \partial_{\cE_x}\rho-\partial_{\cE_y}\rho\right]+\frac{d-2}{2}\sum_{|x-y|=1}\left[\cE_x^{-1}-\cE_y^{-1}\right]\rho.
\]
Using  equation \eqref{eq:ab-comp} we finally obtain \eqref{eq:drift-formula} and from Lemma \ref{lem:rho-asymp} follows \eqref{eq:ab-asy}. 

Moreover, by Lemma \ref{lem:rho-asymp},
\[
\begin{split}
\partial_{\cE_x}\rho_{xy}&=\frac{1}{\sqrt{2\cE_x}}\partial_a\rho(\sqrt {2\cE_x},\sqrt {2\cE_y})\\
\partial_{\cE_y}\rho_{xy}&=-\frac{1}{2\cE_y}\left\{\rho(\sqrt {2\cE_x},\sqrt {2\cE_y})+\sqrt{2\cE_x}\partial_a\rho(\sqrt {2\cE_x},\sqrt {2\cE_y})\right\}\\
&=-\frac{\bbeta(\cE_x,\cE_y)^2}{2\cE_x\cE_y}-\frac{\cE_x}{\cE_y}\partial_{\cE_x}\rho_{xy}.
\end{split}
\]
Hence
\[
\begin{split}
\cE_x\ba(\cE_x,\cE_y)&=\bbeta^2+\cE_x^2\partial_{\cE_x}\rho_{xy}-\cE_x^2\partial_{\cE_y}\rho_{xy}+\frac{d-2}2\bbeta^2-\frac{d-2}2\cE_x\cE_y^{-1}\bbeta^2\\
&=\left\{\frac d2-\frac{d-1}2\frac{\cE_x}{\cE_y}\right\}\bbeta(\cE_x,\cE_y)^2+\cE_x^2\left\{1+\frac{\cE_x}{\cE_y}\right\}\partial_{\cE_x}\rho_{xy}.
\end{split}
\]
The regularity of the coefficients follows from the previous results and some 
algebraic computations. 
At last, for $\cE_y>M\cE_x$,
\[
\cE_x\ba(\cE_x,\cE_y)\geq\left\{\frac d2-\frac{d-1}{2M}\right\}\bbeta(\cE_x,\cE_y)^2-\frac{B(1+M^{-1})}{2M}\frac{\cE_x}{(2\cE_y)^{\frac 12}}.
\]
On the other hand
\begin{equation}\label{eq:diff-est}
\bbeta(\cE_x,\cE_y)^2=\cE_x\rho(\sqrt {2\cE_x},\sqrt {2\cE_y})\geq \cE_x\left[ \frac{A}{\sqrt{2\cE_y}}-\frac{B\sqrt{2\cE_x}}{2\cE_y}\right]\geq \frac{A\cE_x}{4\sqrt{2\cE_y}},
\end{equation}
from which the Lemma follows.
\end{proof}

\section{The limit equation ($\delta=0$): unreachability of zero energy}
\label{sec:zero}

Our last task it to prove that the stopping time $\tau_\delta$ tends to infinity when $\delta$ tends to zero or, in other words, energy zero is {\em unreachable} for the limit equation.

Fix any $T>0$.

For each subset $\Gamma\subset \Lambda$ let us define the energy of the cluster $\cE_\Gamma:=\sum_{x\in\Gamma}\cE_x$. Also, for each $\delta>0$, $n\in\{1,\dots,|\Lambda|\}$, let us define the stopping times 
\[
\tau_\delta^n:=\inf\{t\in[0,\infty)\;:\; \exists \;\Gamma\subset \Lambda, |\Gamma|=n, \cE_\Gamma(t)\leq \delta\}\wedge T.
\]
Note that $\tau^1_\delta=\tau_\delta\wedge T$, where $\tau_\delta$ is defined at the beginning of section \ref{sec:reach}.

\begin{lem}\label{lem:no-zero}
Let $\bP$ be the measure associated to the process \eqref{eq:final-eq}, then
\[
 \lim_{\delta\to 0}\bP\left( \left\{\tau^1_{\delta}<T\right\}\right)=0. 
 \]
\end{lem}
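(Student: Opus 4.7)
I plan to proceed by downward induction on $n \in \{1, \ldots, |\Lambda|\}$, proving $\lim_{\delta \to 0} \bP(\tau_\delta^n < T) = 0$ at each level; the lemma's statement is the $n=1$ case. The base $n=|\Lambda|$ is immediate: total energy is conserved (Lemma \ref{lem:energy}, translated back through $\cE_x = e^{z_x}$), so $\cE_\Lambda(t) \equiv \sum_x E_x > 0$ and $\tau^{|\Lambda|}_\delta = T$ once $\delta$ is small.

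For the inductive step, fix $\eta > 0$ and use the hypothesis to choose $\delta_1 > 0$ with $\bP(\tau_{\delta_1}^{n+1} < T) < \eta$; write $\tau_* := \tau_{\delta_1}^{n+1}$. The central geometric observation is that whenever $t < \tau_*$ and some $n$-subset $\Gamma \subset \Lambda$ satisfies $\cE_\Gamma(t) \leq \delta_0 := \delta_1/(M+1)$, every boundary neighbor $y$ (with $x \in \Gamma$, $|x-y|=1$, $y \notin \Gamma$) must obey $\cE_y > \delta_1 - \delta_0 = M\delta_0 \geq M\cE_x$, since $\Gamma \cup \{y\}$ is an $(n+1)$-subset whose total energy exceeds $\delta_1$. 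This activates Lemma \ref{CrDriftUp} on every boundary edge; moreover, inspection of its proof shows that by enlarging $M$ one can upgrade \eqref{DriftLog} to the \emph{strict} form $\ba(\cE_x, \cE_y)\cE_x \geq (1+\alpha)\bbeta(\cE_x,\cE_y)^2$ for some $\alpha \in (0, (d-2)/2)$, which requires $d \geq 3$.

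Fix such $\alpha, M$, and let $V \in \cC^2((0, \infty), [0,\infty))$ be decreasing with $V(s) = s^{-\alpha}$ on $(0, \delta_0]$ and $V \equiv 0$ on $[2\delta_0, \infty)$ (arranged so that $2\delta_0 < \delta_1$, with smooth interpolation on $(\delta_0, 2\delta_0)$). Since $d\cE_\Gamma = \sum_{(x,y)\in\partial\Gamma} \ba\, dt + \sqrt{2}\sum_{(x,y)\in\partial\Gamma}\bbeta\, dB_{xy}$ and $d\langle\cE_\Gamma\rangle = 2\sum_{\partial\Gamma}\bbeta^2\, dt$, Ito's formula gives, on $\{t < \tau_*,\; \cE_\Gamma \leq \delta_0\}$, the drift of $V(\cE_\Gamma)$ equal to $\alpha\, \cE_\Gamma^{-\alpha-2}\sum_{\partial\Gamma}\bigl[(1+\alpha)\bbeta^2 - \ba\,\cE_\Gamma\bigr]$, which is $\leq 0$ by the strengthened inequality (using $\cE_\Gamma \geq \cE_x$). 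On $\{\cE_\Gamma \geq 2\delta_0\}$ the drift vanishes, and in the intermediate band $\cE_\Gamma \in (\delta_0, 2\delta_0)$ the coefficients $\ba, \bbeta^2$ are bounded (since $\cE_x \leq 2\delta_0$ and, on $\{t < \tau_*\}$, each boundary $\cE_y > \delta_1 - 2\delta_0 > 0$), so the drift is dominated by a constant $C_1 = C_1(\delta_1, V)$.

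Consequently $\bE\, V(\cE_\Gamma((t \wedge \tau_\delta^\Gamma) \wedge \tau_*)) \leq V(\cE_\Gamma(0)) + C_1 T$ for all $t \leq T$, where $\tau_\delta^\Gamma := \inf\{s : \cE_\Gamma(s) \leq \delta\}$. On the event $\{\tau_\delta^\Gamma \leq \tau_* \wedge T\}$ the left side is at least $\delta^{-\alpha}$, so $\bP(\tau_\delta^\Gamma \leq \tau_* \wedge T) \leq \delta^{\alpha}\bigl(V(\cE_\Gamma(0)) + C_1 T\bigr)$. Union-bounding over the finitely many $n$-subsets $\Gamma$ and adding $\bP(\tau_* < T) < \eta$ yields $\bP(\tau_\delta^n < T) \leq C \delta^\alpha + \eta$; sending $\delta \to 0$ and then $\eta \to 0$ closes the induction. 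The main obstacle is securing the \emph{strict} constant $1+\alpha$ (rather than just $1$) in the drift inequality, since only then does $\cE_\Gamma^{-\alpha}$ function as a supermartingale near zero---and this is precisely where the hypothesis $d \geq 3$ is used.
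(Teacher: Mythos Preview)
Your proof is correct and shares the paper's inductive scaffold (backward induction on cluster size, base case from energy conservation, and the key geometric observation that boundary neighbors of a small cluster are large before $\tau_*$), but the core analytic step is genuinely different.

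The paper works with $Y=\ln\cE_\Gamma$, whose drift is nonnegative by the \emph{unstrengthened} inequality \eqref{DriftLog}; it then bounds $Y$ from below by a martingale $\bM$ and runs a downcrossing/Doob argument to control the probability that $\bM$ reaches $\ln\delta_n$ before making enough upcrossings. You instead take the Lyapunov function $\cE_\Gamma^{-\alpha}$ and a direct supermartingale--Markov bound. This is cleaner and avoids the downcrossing machinery entirely, but at the price of needing the strict inequality $\ba\,\cE_x\geq(1+\alpha)\bbeta^2$. You correctly observe that this comes for free from the proof of Lemma~\ref{CrDriftUp} by enlarging $M$, since the leading coefficient there tends to $d/2>1$ as $M\to\infty$ when $d\geq 3$; this is exactly where the dimension hypothesis enters in both arguments. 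Your approach also yields a quantitative rate $\bP(\tau_\delta^n<T)=O(\delta^\alpha)$ at each inductive level, whereas the paper's downcrossing bookkeeping produces a tower-type dependence ($\delta_n\sim\delta_{n+1}^{C/\delta_{n+1}}$, as noted in its footnote). Both are standard Feller-type non-attainability tests; yours simply exploits a bit more of the available margin in the drift estimate.
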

\begin{proof} We we will prove that for each $\eta>0$ and $n\in \{1,\dots|\Lambda|\}$ there exists $\delta_n=\delta_n(\eta)$, such that
\[
 \bP\left( \left\{\tau^n_{\delta_n}<T\right\}\right)\leq 2^{-n }\eta. 
 \]
The proof is by (backward) induction. The case $n=|\Lambda|$ follows by the energy conservation by choosing $\delta_{|\Lambda|}<\frac{\cE_{\Lambda}}2$.
 
Next, suppose the statement true for $n+1\leq  |\Lambda|$. 
It is convenient to define, for each $\Gamma\subset \Lambda$ the stopped process $\hat\cE_\Gamma(t)=\cE_\Gamma(t\wedge \tau^{n+1}_{\delta_{n+1}})$ and the set $\Omega=\{\tau_{\delta_{n+1}}^{n+1}\geq T\}$. Then, for each $0<\delta<\delta_{n+1}$, we have
\[
\begin{split}
\bP\left( \left\{\tau^n_{\delta}<T\right\}\right)&\leq \bP\left( \left\{\tau^n_{\delta}<T\right\}\cap \Omega\right) +2^{-(n+1)}\eta\\
&\leq \bP\left(\bigcup_{\substack{\Gamma\subset \Lambda\\|\Gamma|=n}}\left\{\inf_{t\in [0,T]} \hat\cE_\Gamma(t)\leq \delta\right\}\right)+2^{-(n+1)}\eta. 
\end{split}
\]

It thus suffices to show that there exists $\delta_n\leq \delta_{n+1}$ such that, for each $\Gamma\subset \Lambda$, $|\Gamma|=n$, we have
\[
\bP\left(\left\{\inf_{t\in [0,T]} \hat\cE_\Gamma(t)\leq \delta_n\right\}\right) 
\leq 2^{-(|\Lambda|+n+1)}\eta \leq \binom{|\Lambda|}{n}^{-1}2^{-(n+1)}\eta. 
\]
Let us fix $\Gamma\subset \Lambda$, $|\Gamma|=n$.

Observe that if $\Omega$ holds but $\cE_\Gamma(t) \leq \frac{\delta_{n+1}}{M+1}$ 
then $\cE_y\geq \frac{M\delta_{n+1}}{M+1}\geq M\cE_\Gamma\geq M\cE_x$ for all $y\not\in\Gamma$ and $x\in \Gamma$. In the following we will chose $M$ as in the statement of Lemma \ref{CrDriftUp}.

Next, we define the process $Y=\ln \cE_\Gamma$ which satisfies
\begin{equation}\label{eq:y-log}
dY= \sum_{(x,y)\in B(\Gamma)}\left\{\frac{ \ba( \cE_x, \cE_y) \cE_\Gamma-\bbeta( \cE_x, \cE_y)^2}{2 \cE_\Gamma^2} dt+
\sqrt 2\bbeta(\cE_x,\cE_y)\cE_\Gamma^{-1} dB_{xy}\right\},
\end{equation}
where $B(\Gamma)=\{(x,y)\in\Lambda^2\;:\; x\in \Gamma,\; y\not\in\Gamma,\;|x-y|=1\}$.

Observe that by Corollary \ref{CrDriftUp} the drift is positive, indeed
\[
\sum_{(x,y)\in B(\Gamma)}\frac{\left(\ba(\cE_x,\cE_y) \cE_\Gamma-\bbeta(\cE_x,\cE_y)^2\right)}{2 \cE_\Gamma^2} \geq
\sum_{(x,y)\in B(\Gamma)}\frac{\left( \ba(\cE_x,\cE_y) \cE_x-\bbeta(\cE_x,\cE_y)^2\right)}{2 \cE_\Gamma^2}\geq 0. 
\]
In addition, arguing as in \eqref{eq:diff-est}, if $\cE_\Gamma(t) \leq \frac{\delta_{n+1}}{M+1}$ we have, for some constant $C>0$,
\begin{equation}\label{eq:big-diff}
\cE_\Gamma^{-2}\bbeta(\cE_x,\cE_y)^2\leq 2\frac{\cE_x}{\cE_\Gamma^2}\left[ \frac{A}{\sqrt{2\cE_y}}+\frac{B\sqrt{2\cE_x}}{2\cE_y}\right]\leq  \frac{C}{\cE_\Gamma^\frac 32}.
\end{equation}
Therefore
\[
Y(t\wedge \tau^{n+1}_{\delta_{n+1}})\geq Y(0)+\int_0^{t\wedge \tau^{n+1}_{\delta_{n+1}}}\sum_{(x,y)\in B(\Gamma)}\sqrt 2\bbeta(\cE_x,\cE_y)\cE_\Gamma^{-1} dB_{xy}=:\bM(t).
\]
Note that $\bM$ is a Martingale. Let $\tau_*=\inf\{t\;:\; \bM(t)\leq \ln \delta_{n+1}\}\wedge T$. Consider the new martingale 
$\widetilde \bM(t)=\bM(t)-\bM(t\wedge \tau_*)$ and the stopping time 
$$\hat\tau=\inf\{t\;:\; \widetilde\bM(t)\leq\ln\delta_n-\ln\delta_{n+1}\text{ or }\widetilde\bM(t)\geq -\frac 12\ln\delta_{n+1}\}\wedge T.$$ 
Setting $p=\bP(\{\bM(\hat\tau)=\ln\delta_n\})$ we obtain
\[
0\leq p(\ln\delta_n-\ln\delta_{n+1})-(1-p)\frac 12\ln\delta_{n+1} ,
\]
which implies
\[
\bP(\{\bM(\hat\tau)=\ln\delta_n\})\leq \frac{\ln\delta_{n+1}}{2\ln\delta_n-\ln\delta_{n+1}}.
\]
Set $\delta_n=\delta_{n+1}^\alpha$, $\alpha>1$ to be chosen later. 
The probability that $\bM$, starting from $\ln\delta_{n+1}$ reaches $\ln\delta_n$ before reaching $\frac 12 \ln\delta_{n+1}$ is smaller than $(2\alpha-1)^{-1}$. Accordingly, the probability that the martingale reaches $\ln\delta_n$ before downcrossing $L$ times the interval $[\ln\delta_{n+1},\frac 12\ln\delta_{n+1}]$ is smaller than $1-(1- (2\alpha-1)^{-1})^L\leq \alpha^{-1}L$. On the other hand by Doob's inequality the expectation of the number of downcrossing is bounded by
$\frac 2{\ln\delta_{n+1}^{-1}}\bE((\bM-\frac 12\ln\delta_{n+1})^+)$. Since $\bM-\frac 12\ln\delta_{n+1}\geq 0$ implies
$\cE_\Gamma\geq\sqrt{\delta_{n+1}}$, by \eqref{eq:big-diff} follows 
\[
\bE((\bM-\frac 12\ln\delta_{n+1})^+)\leq C\delta_{n+1}^{-\frac 34},
\]
for some constant $C$ independent on $\ve$.
From this it immediately follows that the probability to have more than $L$ downcrossing is less that $L^{-1}\delta^{-1}_{n+1}$. In conclusion,  
\[
\bP\left(\left\{\inf_{t\in [0,T]} \hat\cE_\Gamma(t)\leq \delta_n\right\}\right)\leq C(\alpha^{-1}L+L^{-1}\delta_{n+1}^{-1})
\]
which yields the wanted estimate by first choosing $L^2=\alpha\delta_{n+1}^{-1}$ and then setting $\alpha=C^2\delta_{n+1}^{-1}2^{2|\Lambda|+2n+4}\eta^{-2}$.\footnote{Note that $\delta_n\sim\delta_{n+1}^{C\delta_{n+1}^{-1}}$ for some constant $C$. So, for large $\Lambda$, $\delta_1$ is absurdly small. Yet, this suffices for our purposes.}
\end{proof}

\begin{cor}
The set $\{\exists x: \cE_x=0\}$ is inaccessible for the limiting
equation.
\end{cor}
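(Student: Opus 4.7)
The plan is to deduce the corollary directly from Lemma \ref{lem:no-zero} by a standard path-continuity argument. First I would reduce to bounded time: by countable subadditivity it suffices to show that for every fixed $T>0$ the event $A_T:=\{\exists\, t\in[0,T],\ \exists\, x\in\Lambda:\cE_x(t)=0\}$ has zero probability under the measure $\bP$ of the limiting process \eqref{eq:final-eq}.

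The key inclusion comes from path continuity. The process $\cE$ solves \eqref{eq:final-eq} and thus has continuous sample paths. On any sample path for which $\cE_x(t_0)=0$ with some $t_0\le T$ and $x\in\Lambda$, the intermediate value theorem applied to the continuous function $t\mapsto\cE_x(t)$, which starts at $\cE_x(0)=E_x>0$ and reaches $0$ by time $t_0$, yields, for every $\delta\in(0,\min_y E_y)$, a time $t_\delta\in[0,t_0)$ with $\cE_x(t_\delta)=\delta$ (strict inequality since $\delta>0=\cE_x(t_0)$). Hence $\tau_\delta\le t_\delta<T$ on such a path, so
\[
A_T\subset\{\tau_\delta<T\}=\{\tau_\delta^1<T\}\qquad\text{for every } \delta\in(0,\min_y E_y).
\]

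Letting $\delta\to 0$ and invoking Lemma \ref{lem:no-zero} yields $\bP(A_T)\le \lim_{\delta\to 0}\bP(\tau_\delta^1<T)=0$, which is the claimed inaccessibility. There is no real obstacle at this stage: all of the substantive work has already been done in Lemma \ref{lem:no-zero}, whose backward induction on cluster size relies on the drift estimate \eqref{DriftLog} of Lemma \ref{CrDriftUp} and on the assumption $d\ge 3$; the corollary itself is a clean topological consequence of that lemma combined with continuity of the paths of $\cE$.
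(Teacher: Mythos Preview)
Your argument is correct and is exactly the intended deduction: the paper states the corollary without proof, treating it as an immediate consequence of Lemma~\ref{lem:no-zero}, and your path-continuity/intermediate-value reasoning is the standard way to spell out that implication.
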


\appendix
\section{An averaging Theorem}\label{sec:dolgopyat}
In this appendix, for the reader convenience, 
we recall \cite[Theorem 7]{Do2} stating it in reduced generality but 
in a form directly applicable to our setting.

Let $M$ be a $\cC^\infty$ Riemannian manifold, $z\in\bR^d$ and $f_z\in {\operatorname{Diff}}^\infty(M,M)$ a family of partially hyperbolic diffeomorphisms.\footnote{By this we mean that, for each fixed $z$, at each point $x\in M$ the tangent space of $T_xM$ can be written as $E_u(x)\oplus E_c(x)\oplus E_s(x)$, where the splitting is invariant with respect to the dynamics, i.e. $d_xf E_*(x)=E_*(f(x))$ for $*\in\{u,c,s\}$. In addition, there exists constants $\lambda_1\leq \lambda_2<\lambda_3\leq\lambda_4<\lambda_5\leq\lambda_6$, with $\lambda_2,\lambda_5^{-1}<1$, such that
$\lambda_1\leq\alpha( df|_{E_s})\leq \| df|_{E_s}\|\leq \lambda_2$, $\lambda_3\leq\alpha( df|_{E_c})\leq \| df|_{E_c}\|\leq \lambda_4$ and $\lambda_5\leq\alpha( df|_{E_u})\leq \| df|_{E_u}\|\leq \lambda_6$, where $\alpha(A)=\|A^{-1}\|^{-1}$.
}
 
We say that $\{f_z\}$ is a {\em family of Anosov elements} (FAE) if there exists Abelian actions $g_{z,t}$, $t\in\bR^{d_c}$ where $d_c=\dim{E_c}$,  such that $f_z\circ g_{z,t}= g_{z,t}\circ f_z$ and ${\operatorname{span}}\{\partial_{t_i}g_{z,t}\}=E_c$.

Next, we need to discuss decay of correlations that in \cite{Do2} is meant in a very precise technical sense.
The basic concept is the one of {\em standard pairs}. For the present purposes a standard pair can be taken to be a probability measure determined by the couple $\ell=(D, \rho)$ where $D$ is a $\cC^2$ $\dim(E_u)$-dimensional manifold $D$ close to the strong unstable manifold and a smooth function $\rho\in \cC^1(D,\bR_+)$ such that $\int_D\rho=1$.\footnote{The integral is with respect the volume form on $D$ induce by the Riemannian metric.} We set $\bE_\ell(A)=\int_D A\rho$. The point is that it is possible to choose a set $\Sigma$ of manifolds $D$ of uniform bounded diameter and curvature such that, for each $D\in \Sigma$, $f_zD$ can be covered by a fixed number of elements of $\Sigma$. For each $C>0$ we consider the set $E_1=\{ (D,\rho)\;:\; D\in\Sigma, \|\rho\|_{\cC^1}(D,\bR)\leq C\}$ and let $E_2$ be the convex hull of  $E_1$ in the space of probability measures. 

It is easy to check that one can chose $\Sigma$ and $C$ such that for all $\ell\in E_1$ there exists a family $\{\ell_i\}\subset E_1$ such that $\bE_\ell(A\circ f_z)=\sum_{i=1}^{n_z} c_i^z\bE_{\ell_i}(A)$. In addition one can insure that any measure with $\cC^1$ density with respect to the Riemannian volume belongs to the weak closure of $E_2$ (see \cite{Do2} for more details).

We say that the family $\{f_z\}$ has uniform exponential decay of correlations if there exists $C_1,C_2>0$ such that, for each $z\in\bR^d$ there exists probability measures $\mu_z$ such that for each $n\in\bN$, standard pair $\ell\in E_1$ and functions $A\in\cC^1(M,\bR)$ it holds
\[
\left|\bE_\ell(A\circ f_z^n)-\mu_z(A)\right|\leq C_1e^{-C_2n}|A|_{\cC^1}.
\]

Consider now the function $F\in\cC^\infty(M\times\bR^d\times \bR_+,M\times\bR^d)$, 
\begin{equation}\label{eq:fulleq}
F(x,z,\ve)=(\tilde f(x,z,\ve),z+\ve A(x,z,\ve)),
\end{equation}
and the associated dynamical systems $F_\ve(x,z)=F(x,z,\ve)$,
such that $\tilde f(x,z,0)=f_z(x)$. Let $(x^\ve_n(x,z),z^\ve_{n}(x,z)):=F_\ve^n(x,z)$.  Then for each $g\in\cC^r(M,\bR_+)$, $\mu(g)=1$ we can define the measure $\mu_g(h):=\mu(g\cdot h)$ and consider the Dynamical Systems $(F_\ve, M\times \bR^d)$ with initial conditions $z=z_0$ and $x$ distributed according to the measure $\mu_g$. We can then view $z^\ve_{n}$ as a random variable, clearly $\bE(\psi(z^\ve_{n}))=\mu_g(\tilde\psi\circ F_\ve^n)$, where $\tilde\psi(x,z)=\psi(z)$.

\begin{thm}[\cite{Do2}] \label{thm:dolgopyat}
Let $F, F_\ve, f_z$ be defined as in \eqref{eq:fulleq} and subsequent lines. Let $f_z$ be FAE with uniform exponential decay of correlation. Suppose that there exists $\ve_0,C_r\in\bR_+$ such that $\sup_{\ve\leq\ve_0}\|A(\cdot,\cdot,\ve)\|_{\cC^r}\leq C_r$ and $\mu_z(A(\cdot,z,0))=0$ for all $z$. Also assume that $z^\ve_0=z_*$ and $x^\ve_0$ has a smooth distribution on $M$ as described above, then
\begin{enumerate}[a)]
\item The family $\{z^\ve_{\lceil t\ve^{-2}\rceil}\}$ is tight.
\item There exists functions $\sigma^2\in\cC^1(\bR^d, SL(d,\bR^d))$, $\sigma^2>0$, $a\in\cC^0(\bR^d,\bR^d)$ such that the accumulation points of $\{z^\ve_{\lceil t\ve^{-2}\rceil}\}$ are a solution of the Martingale problem associated to the diffusion
\[
\begin{split}
&dz=adt+\sigma dB\\
&z(0)=z_*,
\end{split}
\]
where $\{B_i\}_{i=1}^d$ are independent standard Brownian motions and
\[
\sigma^2(z)=\sum_{n=-\infty}^{\infty}\int_M A(x,z,0)\otimes A(f_z^nx,z,0) \mu_z(dx).
\]
Moreover $\|a\|_{\cC^0}+\|\sigma^2\|_{\cC^1}<\infty$.
\end{enumerate}
\end{thm}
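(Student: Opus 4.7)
The plan is to establish tightness first, then identify accumulation points via a martingale problem, then read off the explicit form of the drift and variance. The key machinery is the combination of (i) the standard pairs formalism to propagate smooth initial distributions under $f_z$ while tracking a slowly varying parameter, and (ii) a resolvent/Poisson-equation decomposition that turns the non-centered slow increments $\ve A$ into a martingale part plus a coboundary.

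For tightness (part a)), I would work at the level of the embedded discrete process and prove that for $m \leq n \leq m + \lceil T\ve^{-2}\rceil$,
\[
\bE\bigl|z^\ve_n - z^\ve_m\bigr|^2 \leq C \,(n-m)\ve^2 + C\ve^2 ,
\]
with a constant uniform in $\ve$. The quadratic estimate follows by writing $z^\ve_n - z^\ve_m = \ve\sum_{k=m}^{n-1} A(x^\ve_k, z^\ve_k, \ve)$, expanding the square, and using the uniform exponential decay of correlations to bound $\bE\bigl(A(x^\ve_i,z^\ve_i,\ve)\cdot A(x^\ve_j,z^\ve_j,\ve)\bigr)$ by $C\lambda^{|j-i|}$ plus errors of order $\ve|j-i|$ that come from freezing $z$. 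The bookkeeping is done via standard pairs: the image of any $\ell \in E_1$ under $F_\ve$ decomposes into a convex combination of elements of $E_1$, and the convex hull structure lets decay of correlations be applied pairwise along the trajectory. The Aldous-Kurtz tightness criterion in $\cC^0([0,T],\bR^d)$ then follows from the second-moment estimate on rescaled increments.

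For the martingale characterization (part b)), I would fix $\psi \in \cC^3(\bR^d,\bR)$ and consider
\[
\psi(z^\ve_{n+1})-\psi(z^\ve_n) = \ve\, A(x^\ve_n,z^\ve_n,\ve)\cdot \nabla\psi(z^\ve_n) + \tfrac{\ve^2}{2}\bigl(A\otimes A\bigr):\nabla^2\psi(z^\ve_n) + O(\ve^3).
\]
For the linear-in-$\ve$ term, the averaging $\mu_z(A(\cdot,z,0))=0$ and uniform exponential decay of correlations guarantee that the Poisson equation $(I - f_z^*)\Phi(\cdot,z) = A(\cdot,z,0)$ admits a $\cC^1$-in-$z$ solution $\Phi$ with $\|\Phi\|_{\cC^r}$ controlled in terms of $\|A\|_{\cC^{r+1}}$. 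Freezing $z$ at $z^\ve_n$, substituting $A = \Phi - \Phi\circ f_z$, and summing by parts converts the first-order term into (i) a martingale difference $\ve\bigl(\Phi(x^\ve_n,z^\ve_n) - \bE_n\Phi(x^\ve_n,z^\ve_n)\bigr)\cdot\nabla\psi$, (ii) a telescoping boundary term of order $\ve$, and (iii) a correction coming from $z^\ve_{n+1}\neq z^\ve_n$ which, after one more Taylor expansion, contributes an additional $\ve^2\, a^{\mathrm{Str}}(z^\ve_n)\cdot\nabla\psi$ term (this is the "Stratonovich correction" appearing as part of $a$). Summing the $O(\ve^2)$ second-order term over $n = 0,\dots,\lceil t\ve^{-2}\rceil$ and using the decay of correlations to replace time averages by spatial averages against $\mu_z$ produces $\frac12\sigma^2(z):\nabla^2\psi$ with the Green-Kubo formula
\[
\sigma^2(z) = \sum_{n\in\bZ}\int_M A(x,z,0)\otimes A(f_z^n x, z, 0)\,d\mu_z(x),
\]
with absolute convergence guaranteed by exponential decay of correlations. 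One thus identifies any weak limit point as a solution of the martingale problem for $\cL = a\cdot\nabla + \tfrac12 \sigma^2:\nabla^2$, where $a$ is the sum of the Stratonovich correction and the drift produced by the explicit $\ve$-dependence of $A$.

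The main obstacle is controlling the \emph{simultaneous} evolution of the fast variable $x^\ve_k$ and the slow variable $z^\ve_k$: the Poisson solution $\Phi(\cdot,z)$ is defined only for \emph{frozen} $z$, whereas on a time window of $\ve^{-2}$ steps the slow variable already drifts by $O(1)$. The remedy is to split such a window into blocks of length $N(\ve)$ with $1 \ll N(\ve) \ll \ve^{-2}$ (say $N(\ve) = \lfloor |\ln\ve|^2\rfloor$), freeze $z$ at the start of each block, and use the $\cC^1$ dependence of $\Phi$ on $z$ together with the moment estimate of the first paragraph to show that the drift of $z$ across a block is $O(\ve^2 N)$, producing a total error of order $\ve^2 N \cdot \ve^{-2}/N = 1$ in the \emph{number} of blocks but only $o(1)$ in the accumulated $\psi$-increment. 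Regularity of $\sigma^2$ (in $\cC^1$) and continuity of $a$ follow from the $\cC^r$ regularity of $A$ in $(x,z,\ve)$ combined with $\cC^r$ regularity of the Poisson solution in $z$, the latter being a standard consequence of the uniform spectral gap of the transfer operators associated to the FAE family $\{f_z\}$.
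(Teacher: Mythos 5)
The paper does not prove this theorem; it is a verbatim (in reduced generality) restatement of~\cite[Theorem~7]{Do2}, so there is no ``paper's own proof'' to compare with. Measured against the argument in~\cite{Do2}, your sketch captures the correct overall architecture (standard pairs, block decomposition with $1 \ll N(\ve) \ll \ve^{-2}$, Green--Kubo formula), but the central step is flawed as written.

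You propose to solve $(I - f_z^*)\Phi = A$ and substitute $A = \Phi - \Phi\circ f_z$. In a deterministic hyperbolic system this equation has no $\cC^1$ solution for generic zero-mean $A$: if it did, $A$ would be a smooth coboundary, the Birkhoff sums $\sum_k A\circ f_z^k$ would telescope and stay bounded, and the Green--Kubo variance $\sigma^2$ you want to extract would vanish identically --- exactly the degenerate case excluded in footnote~\ref{foot:auto} of the paper. Relatedly, the ``martingale difference'' $\Phi(x^\ve_n,z^\ve_n) - \bE_n\Phi(x^\ve_n,z^\ve_n)$ you introduce is zero when $\bE_n$ is conditioning on the deterministic trajectory filtration, since $x^\ve_n$ is measurable there. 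The genuine martingale structure in the Dolgopyat-style proof arises only after passing to the filtration generated by the iterated standard-pair decompositions: conditional on a standard pair, the forward image equidistributes, and it is with respect to \emph{that} conditioning that one gets nontrivial martingale increments and a Gordin-type decomposition. Two smaller gaps: (i) the tightness estimate $\bE|z^\ve_n-z^\ve_m|^2 \le C(n-m)\ve^2$ cannot be obtained by simply bounding $\bE(A_iA_j) \le C\lambda^{|i-j|} + C\ve|i-j|$ and summing, because the freezing-$z$ errors $\ve|i-j|$ accumulate to order $\ve(n-m)^3 \sim \ve^{-5}$ over a window $n-m\sim\ve^{-2}$ and the estimate diverges; you need the block decomposition already at this stage. (ii) Summing the Ito second-order Taylor term $\ve^2(A\otimes A):\nabla^2\psi$ and averaging against $\mu_z$ only recovers $\mu_z(A\otimes A)$, i.e.\ the $n=0$ term of the Green--Kubo sum; the off-diagonal terms come from the quadratic variation of the block-conditioned martingale, not from the pointwise quadratic Taylor term.
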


\section{The properties of $\rho_{xy}$.}\label{app:rho}
Here we collect, a bit boring, proofs of the Lemmata concerning  $\rho_{xy}$.

\begin{proof}[{\bf Proof of Lemma \ref{lem:rho-tilde}}]
The non-negativity follows from the fact that the quantity is an autocorrelation, see footnote \ref{foot:auto} for details. By definition
\[
\begin{split}
\partial^n_a\partial^m_b\tilde\rho(a,b)&=\int_{-\infty}^{\infty} dt\; t^{n+m}\bE\left((L_1^{n+1}L_2^m V)\circ g^{a t}\otimes g^{bt}\cdot L_2 V\right)\\
&=(-1)^{n+m}\int_{-\infty}^{\infty} dt\; t^{n+m}\bE\left((L_1^{n+1}L_2^m V)\cdot L_2 V\circ g^{a t}\otimes g^{bt}\right).
\end{split}
\]
Applying \eqref{eq:flow-dec} to the above formula yields
\[
|\partial^n_a\partial^m_b\tilde\rho(a,b)|\leq C_{n,m}\int_{0}^{\infty} dt\; t^{n+m}e^{-c\min\{a,b\}t}\leq C_{n,m}\min\{a,b\}^{-n-m-1}.
\]
This proves the smoothness of $\tilde\rho$. To continue, consider
\[
\tilde\rho(\lambda a,\lambda b)=\int_{-\infty}^{\infty} dt\; \bE\left((L_1V)\circ g^{a\lambda t}\otimes g^{b\lambda t}\cdot L_2 V\right)=
\lambda^{-1}\tilde\rho(a,b)
\]
by the change of variables $t\to\lambda t$. The symmetry follows by a change of variables as well.
Finally,
\[
\begin{split}
\tilde\rho(a,b)&=\int_{-\infty}^{\infty} dt\; \bE\left((L_1V)\circ g^{a t}\otimes g^{bt}\cdot L_2 V\right)\\
&=\int_{-\infty}^{\infty} dt\; \bE\left(L_1V\cdot (L_2 V)\circ g^{a t}\otimes g^{bt}\right)\\
&=b^{-1}\int_{-\infty}^{\infty} dt\; \frac {d}{dt}\bE\left(L_1V\cdot  V\circ g^{a t}\otimes g^{bt}\right)\\
&\quad-\frac ab\int_{-\infty}^{\infty} dt\; \bE\left(L_1V\cdot (L_1V)\circ g^{a t}\otimes g^{bt}\right).
\end{split}
\]
The lemma follows then by the mixing of $g^{a t}\otimes g^{bt}$ (being the product of two mixing flows) and the definition of $\rho$. 
\end{proof}

To continue it is useful to define and study the function $\Gamma(\tau):=\rho(\tau,1)$.

\begin{lem}\label{lem:gamma-aus} There exists $A, B>0$ and $D\in\bR$ such that
\[
\begin{split}
&\left|\Gamma(\tau)-\frac A{1+\tau^3}\right|\leq \frac{B\tau}{1+\tau^5},\quad \forall \tau>0,\\
&|\Gamma'(\tau)-D\tau|\leq B\tau^2,\quad\quad\quad\quad \forall \tau\in(0, 1],\\
&|\Gamma'(\tau)+3A\tau^{-4}|\leq B\tau^{-5},\quad\quad \forall \tau\geq 1.
\end{split}
\]
\end{lem}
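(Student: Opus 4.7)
The plan is to combine two structural consequences of Lemma \ref{lem:rho-tilde} with a time-reversal symmetry, reducing everything to a Taylor expansion of $\Gamma$ at $0$. From the identity $\tilde\rho(a,b)=-(a/b)\rho(a,b)$ I obtain the representation $\Gamma(\tau)=-\tilde\rho(\tau,1)/\tau$; from $\tilde\rho(a,b)=\tilde\rho(b,a)$ I derive $\rho(b,a)=(a^2/b^2)\rho(a,b)$, and combining with $\rho(\lambda a,\lambda b)=\lambda^{-1}\rho(a,b)$ yields the scaling identity
\[
\Gamma(\tau)=\tau^{-3}\Gamma(1/\tau),\qquad \tau>0.
\]

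Next I would show that $\tilde\rho(\cdot,1)$ extends to an odd $\cC^\infty$ function near $0$. Let $\Theta(q,v)=(q,-v)$ act diagonally on $T^1M$; it preserves the Riemannian volume and satisfies $\Theta\circ g^t=g^{-t}\circ\Theta$. Since $V$ depends only on positions, $L_2V(\Theta\xi,\eta)=L_2V(\xi,\eta)$, whereas $L_1V(\Theta\xi,\eta)=-L_1V(\xi,\eta)$. Substituting $\xi\mapsto\Theta\xi$ and then $t\mapsto-t$ in the integral defining $\tilde\rho$ gives $\tilde\rho(a,b)=-\tilde\rho(a,-b)$, and combined with $\tilde\rho(a,b)=\tilde\rho(b,a)$ this forces $\tilde\rho$ to be odd in each variable. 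Hence $\tilde\rho(a,1)=c_1 a+c_3 a^3+c_5 a^5+\cdots$ and
\[
\Gamma(\tau)=-\tilde\rho(\tau,1)/\tau=-c_1-c_3\tau^2-c_5\tau^4-\cdots
\]
is smooth and even at $0$. Setting $A:=-c_1=\Gamma(0)$ and $D:=-2c_3=\Gamma''(0)$ identifies the constants; positivity of $A$ is the non-degeneracy of $\bbeta$ already needed for Theorem \ref{thm:main} (equivalently, $\rho(a,b)$ is the nonnegative zero-frequency integrated spectral density of $L_1V$ along the mixing joint flow $g^{at}\otimes g^{bt}$, and does not vanish for the $V$ at hand).

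Given these ingredients, the three bounds follow by Taylor expansion combined with the scaling identity. On $[0,1]$, smoothness yields $|\Gamma(\tau)-A-\tfrac{D}{2}\tau^2|\leq C\tau^4$ and $|\Gamma'(\tau)-D\tau|\leq C\tau^3$; writing $\Gamma(\tau)-A/(1+\tau^3)=(\Gamma(\tau)-A)+A\tau^3/(1+\tau^3)$ I conclude $|\Gamma(\tau)-A/(1+\tau^3)|\leq C'\tau^2\leq B\tau/(1+\tau^5)$ for $B$ large, and the derivative bound $|\Gamma'(\tau)-D\tau|\leq B\tau^2$ is immediate. For $\tau\geq 1$, setting $\sigma=1/\tau$ and plugging the small-$\sigma$ expansion into the scaling identity yields
\[
\Gamma(\tau)=\tau^{-3}\bigl(A+\tfrac{D}{2}\tau^{-2}+O(\tau^{-4})\bigr)=A\tau^{-3}+\tfrac{D}{2}\tau^{-5}+O(\tau^{-7}),
\]
whence $|\Gamma(\tau)-A/(1+\tau^3)|\leq C''\tau^{-5}\leq B\tau/(1+\tau^5)$; differentiating the scaling identity and inserting the expansions gives $\Gamma'(\tau)=-3A\tau^{-4}+O(\tau^{-6})$, from which $|\Gamma'(\tau)+3A\tau^{-4}|\leq B\tau^{-5}$.

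The real technical obstacle is the uniform-in-$a$ smoothness of $\tilde\rho(\cdot,1)$ near $a=0$ invoked above: the pointwise bound $|\partial_a^n\partial_b^m\tilde\rho|\leq C_{n,m}\min(a,b)^{-n-m-1}$ proved en route to Lemma \ref{lem:rho-tilde} blows up at $a=0$ and is therefore insufficient. The sharper uniform bound I would use exploits the identity $\int L_2V(\xi,\eta)\,d\eta=0$ (obtained by integrating $v_\eta$ over the fiber $S^{d-1}$): by exponential decay of correlations for $g^t$ applied $\xi$-fiberwise to the mean-zero observable $L_2V(\xi,\cdot)$, one has
\[
\bigl|\bE_\eta\bigl(L_1^{n+1}V(g^{at}\xi,g^t\eta)\,L_2V(\xi,\eta)\bigr)\bigr|\leq C_n e^{-c|t|}
\]
uniformly in $a\geq 0$ and $\xi$. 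Once this is in place, differentiation of $\tilde\rho(a,1)$ under the integral to any order produces $a$-uniform bounds on $[0,1]$, Taylor's theorem applies, and the rest is routine bookkeeping.
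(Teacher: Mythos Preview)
Your route is genuinely different from the paper's. The paper works directly with $\rho$: it splits $V=\overline V+\tilde V$ with $\overline V(q_1)=\bE(V\mid q_1,v_1)$, decomposes the integral for $\Gamma(\tau)$ into a convergent piece plus a coboundary term that vanishes, and arrives at the explicit formula
\[
A=2\int_0^\infty\bigl[\bE(\partial_{q_1}V\cdot\partial_{q_1}V\circ\id\otimes g^t)-\bE((\partial_{q_1}\overline V)^2)\bigr]\,dt,
\]
then repeats a similar (longer) manipulation for $\Gamma'$. Your argument---oddness of $\tilde\rho(\cdot,1)$ via the involution $\Theta$, uniform-in-$a$ smoothness at $a=0$ obtained from $\bE_\eta L_2V(\xi,\eta)=0$, hence evenness and smoothness of $\Gamma$ at $0$---is correct and more structural; it even yields the sharper $\Gamma(\tau)-A=O(\tau^2)$ and $\Gamma'(\tau)-D\tau=O(\tau^3)$. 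Both approaches use the scaling identity $\Gamma(\tau)=\tau^{-3}\Gamma(1/\tau)$ to transfer to $\tau\geq 1$.

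There is, however, one real gap: you do not prove $A>0$. Invoking Theorem~\ref{thm:main} is circular, since the present lemma is exactly where that constant is produced. Your parenthetical only gives $A=\lim_{\tau\to 0^+}\Gamma(\tau)\geq 0$ (nonnegativity of an autocorrelation), and a positive smooth function on $(0,\infty)$ can perfectly well vanish at the boundary, so strict positivity of $\rho(a,b)$ for $a,b>0$ does not force $\Gamma(0)>0$. What the paper's explicit formula buys is precisely this: it exhibits $A$ as an integrated autocorrelation along the $\eta$-geodesic flow of a function depending only on positions, which vanishes only if that function is a smooth coboundary; cocycle rigidity for negatively curved geodesic flows \cite{GK,CS} then forces the function to be constant, contradicting the standing assumption that $V$ is non-constant. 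Your framework gives instead $A=-\partial_a\tilde\rho(0,1)$, from which positivity is not transparent; to close the argument you would still need to identify this quantity with an autocorrelation of a position-only observable and invoke the same rigidity result.
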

\begin{proof}
Let us start by assuming $\tau\leq 1$. By setting $\overline V(q_1)=\bE(V\;|\;q_1,v_1)$, and taking care of adding and subtracting that is needed to write convergent integrals,
\[
\begin{split}
\Gamma(\tau)=&2\int_0^\infty dt \;\bE\left(L_1V \cdot L_1V\circ g^{\tau t}\otimes g^{t})\right)\\
=&2\int_0^\infty dt \;\left[\bE\left(L_1V\cdot L_1V\circ\id\otimes g^{t}\right)-\bE((L_1 \overline V)^2)\right]\\
&+2 \int_0^\infty dt\int_{0}^{\tau t} ds\;\left[\bE\left(L_1V\cdot L_1^2V\circ g^s\otimes g^{t}\right)-\bE(L_1\overline V\cdot L_1^2\overline V\circ g^s)\right]\\
&+2 \int_0^\infty dt\;\bE(L_1 \overline V\cdot L_1 \overline V\circ g^{\tau t})\\
\end{split}
\]
The third term here vanishes since it is the variance of a coboundary. That is,
$$ \int_0^\infty dt\;\bE(L_1 \overline V\cdot L_1 \overline V\circ g^{\tau t})=
\tau^{-1} \int_0^\infty dt\;\frac d{dt}\bE(L_1 \overline V\cdot \overline V\circ g^{t})=0.$$
Also, setting $\tilde V=V-\overline V$, 
\[
\begin{split}
 &\int_0^\infty dt\int_{0}^{\tau t} ds\;\left[\bE\left(L_1V\cdot L_1^2V\circ g^s\otimes g^{t}\right)-\bE(L_1\overline V\cdot L_1^2\overline V\circ g^s)\right]\\
&=\int_0^\infty ds\int_{\tau^{-1}s}^\infty dt\;\bE\left(L_1\tilde V\cdot L_1^2\tilde V\circ g^s\otimes g^{t}\right)
=\cO\left( \int _0^\infty ds \int_{\tau^{-1} s}^\infty e^{-c t} dt\right)= \cO( \tau) 
\end{split}
\]
were we have used \eqref{eq:flow-dec} after conditioning with respect to $q_1,v_1$.
Thus\footnote{Here we use the fact that $\bE(v_1\otimes v_1\;|\;q_1, \eta)=\Id$.}
\begin{equation}\label{eq:comp2}
\begin{split}
\Gamma(\tau)=&2\int_0^\infty dt \;\left[\bE\left(\partial_{q_1}V\cdot \partial_{q_1}V\circ \id\otimes g^{t}\right)-\bE((\partial_{q_1} \overline V)^2)\right]\\
&+ \cO(\tau)=A+\cO(\tau).
\end{split}
\end{equation}

The fact that $A>0$ follows from general theory of mixing flows combined with cocycle rigidity of
geodesic flows \cite{GK, CS}.\footnote{\label{foot:auto} Indeed, for each $T>0$ and $f\in \cC^\infty$, $\bE(f)=0$,
\[
0\leq \bE\left(\left|\int_0^Tf\circ g^t \, dt\right|^2\right)=2\int_0^Tdt (T-t)\bE(f\circ g^t\cdot f)=2T\int_0^Tdt \;\bE(f\circ g^t\cdot f) +\cO(1).
\]
Thus the autocorrelation must be non negative. If it is zero then $\int_0^Tf\circ g^t \, dt$ has uniformly bounded $L^2$ norm. This implies that there exists a weakly converging subsequence to some $L^2$ function $h$ such that $\bE(h)=0$. It is easy to check that such a function is smooth in the stable direction (just compare with the average on stable manifolds) and, for each smooth function $\vf$, $\bE(hL\vf)=-\bE(f\vf)$.  Thus $\bE(h L^n \vf)=-\bE(f L^{n-1}\vf)=(-1)^n\bE(L^{n-1}f\, \vf)$, which implies $L^nh\in L^2$, i.e. $h$ is smooth along weak-stable leaves. Next, letting $\Theta(q,\nu)=(q,-\nu)$, we have $\bE(f\vf)=\bE(h\circ \Theta\cdot L\vf )$, that is $\bE((h+h\circ\Theta) L\vf)=0$ for each smooth $\vf$. In turns, this implies $h=-h\circ \Theta$ a.s.. Indeed, given $\rho\in L^2$, if $\bE(\rho)=0$ and $\bE(\rho L\vf)=0$ for all smooth $\vf$, then one can choose smooth $\rho_n$ that converges to $\rho$ in $L^2$, thus $L\rho_n$ converges weakly to zero, but then there exist convex combinations $\tilde \rho_n$ of the $\{\rho_m\}_{m\leq n}$ such that $L\tilde \rho_n$ converges to zero strongly (since the weak closure of a convex set agrees with its strong closure) and, since $L$ is a closed operator on $L^2$, it follows that $\rho$ is in the domain of $L$ and $L\rho=0$. In addition, the ergodicity of the flow implies that the only $L^2$, zero average, solution of $L\rho=0$ is $\rho=0$. Finally, since $h$ is smooth along the weak-stable foliation and $h\circ\Theta$ is smooth along the unstable foliation, then $h$ has a continuos version by the absolute continuity of the foliations and is smooth by \cite{Jou}, hence $Lh=f$. That is, if the autocorrelation is zero, then $f$ is a smooth coboundary.
At last, the claim follows since a smooth function of the coordinates only which is a coboundary must be identically zero, \cite[Corollary 1.4]{CS}. Accordingly, $\int_{-\infty}^\infty dt\; \bE\left(\partial_{q_1}V(q_1, q_2) \partial_{q_1}V(q_1, g^{t}(q_2,v_2))\;|\; q_1,v_1\right)$ must be strictly positive for positive measure set of $q_1$ otherwise, by the symmetry of the potential, the potential would be constant.}

Next, consider the case $\tau\geq 1$. By Lemma \ref{lem:rho-tilde} we have
\begin{equation}\label{eq:Gamma}
\Gamma(\tau)=\rho(\tau,1)=\tau^{-1}\rho(1,\tau^{-1})=-\tau^{-2}\tilde\rho(\tau^{-1},1)=\tau^{-3}\Gamma(\tau^{-1}).
\end{equation}
Thus $\Gamma(\tau)=\frac{A}{1+\tau^{3}}+\cO(\tau^{-4})$. This readily implies the first part of the Lemma.

Let us compute the derivative
\[
\begin{split}
\frac {\Gamma'(\tau)}2=&\int_0^\infty dt\; t\,\bE\left(L_1V\cdot L_1^2V\circ g^{\tau t}\otimes g^{t}\right)\\
=&\int_0^\infty dt\;t\int_0^{\tau t}ds\;\left\{\bE\left(L_1V \cdot L_1^3V\circ g^s \otimes g^{t}\right)-
\bE\left(L_1  V\cdot L_1^3\overline V\circ g^s\right)\right\}\\
&+\int_0^\infty dt\;t\int_0^{\tau t}ds\;\bE\left(L_1 \overline V\cdot L_1^3\overline V\circ g^s\right)\\
=&\int_0^\infty \!\!\!\!\!\!ds\int_{\tau^{-1} s}^\infty \!\!\!\!\!\!\!\!\!dt\;t\,\bE\left(L_1\tilde V \cdot L_1^3\tilde V\circ g^s \otimes g^{t}\right)+\int_0^\infty \!\!\!\!\!\!dt\;\frac{t}{\tau^2}\,\bE\left(L_1 \overline V\cdot L_1^2\overline V\circ g^t\right)\\
=& \int_0^\infty \!\!\!\!\!\!ds\int_{\tau^{-1} s}^\infty \!\!\!\!\!\!\!\!\!dt\;t\,\bE\left(L_1\tilde V \cdot L_1^3\tilde V\circ  id\otimes g^{t}\right)
-\int_0^\infty dt\;\frac{\bE\left(L_1 \overline V\cdot L_1\overline V\circ g^t\right)}{\tau^2}\\
&+ \int_0^\infty ds\int_{\tau^{-1} s}^\infty dt\;t\,\int_0^s dr\bE\left(L_1\tilde V \cdot L_1^4\tilde V\circ  g^r\otimes g^{t}\right)\\
=&- \int_0^\infty \!\!\!\!\!\!ds\int_{\tau^{-1} s}^\infty \!\!\!\!\!\!\!\!\!dt\;t\,\bE\left(L_1^2\tilde V \cdot L_1^2\tilde V\circ  id\otimes g^{t}\right)
-\frac{\bE\left(L_1 \overline V)\bE(\overline V\right)-\bE(\frac{L_1 (\overline V^2)}2)}{\tau^2}\\
&+\cO\left(\int_0^\infty ds\;\tau^{-1}s^2  e^{-cs\tau^{-1}} \right)\\
=&-\tau\int_{0}^\infty \!\!\!\!dt\;t^2\,\bE\left(L_1^2\tilde V \cdot L_1^2\tilde V\circ  id\otimes g^{t}\right)
+\cO(\tau^2)=:D\tau+\cO(\tau^2).
\end{split}
\]
On the other hand, differentiating  \eqref{eq:Gamma} yields, for $\tau$ large,
\[
\Gamma'(\tau)=-3\tau^{-4}\Gamma(\tau^{-1})-\tau^{-5}\Gamma'(\tau^{-1})=-3A\tau^{-4}+\cO(\tau^{-5})
\]
which completes the proof of the Lemma.
\end{proof}

\begin{rem} Note that $\Gamma(0)$ is not defined as the corresponding integral diverges. Nevertheless, we can set $\Gamma(0)=A$ by continuity.
\end{rem}
\begin{proof}[{\bf Proof of Lemma \ref{lem:rho-asymp}}]
Note that, by Lemma \ref{lem:rho-tilde}, 
\[
\rho(a,b)=b^{-1}\rho(ab^{-1},1)=b^{-1}\Gamma(ab^{-1}).
\] 
Hence the Lemma follows from Lemma \ref{lem:gamma-aus}.
\end{proof}


\end{document}